\newcommand{\subparagraph}{}
\DeclareMathOperator*{\argmax}{arg\,max}
\DeclareMathOperator*{\argmin}{arg\,min}
\newcommand*{\rom}[1]{\expandafter\@slowromancap\romannumeral #1@}
\newtheorem{theorem}{Theorem}
\newtheorem{corollary}{Corollary}
\newtheorem{lemma}{Lemma}
\DeclareMathSymbol{\shortminus}{\mathbin}{AMSa}{"39}
\begin{document}
\def\eg{\mbox{\em e.g.}, }


\title{Capacity Achieving by Diagonal Permutation for MU-MIMO channels}

\author{
\IEEEauthorblockN{Zhibin Zou, and Aveek Dutta}
    \IEEEauthorblockA{Department of Electrical and Computer Engineering\\
    University at Albany SUNY, Albany, NY 12222 USA\\
    \{zzou2, adutta\}@albany.edu}
    \vspace{-5ex}

}
    
\maketitle

\begin{abstract}
Dirty Paper Coding (DPC) is considered as the optimal precoding which achieves capacity for the Gaussian Multiple-Input Multiple-Output (MIMO) broadcast channel (BC). However, to find the optimal precoding order, it needs to repeat $N!$ times for $N$ users as there are $N!$ possible precoding orders. This extremely high complexity limits its practical use in modern wireless networks. In this paper, we show the equivalence of DPC and the recently proposed Higher Order Mercer's Theorem (HOGMT) precoding~\cite{Zou2022ICC, ZouTCOM23} in 2-D (spatial) case, which provides an alternate implementation for DPC. Furthermore, we show that the proposed implementation method is linear over the permutation operator when permuting over multi-user channels. Therefore, we present a low complexity algorithm that optimizes the precoding order for DPC with beamforming, eliminating repeated computation of DPC for each precoding order. Simulations show that our method can achieve the same result as conventional DPC with ${\approx}20$ dB lower complexity for $N {=} 5$ users.
\end{abstract}
\vspace{-10pt}
\begin{IEEEkeywords}
Non-linear Precoding, MU-MIMO, Dirty Paper Coding (DPC), Beamforming, Precoding Orders Optimization.
\end{IEEEkeywords}
\vspace{-5pt}
\section{Introduction}
\label{sec:intro}


Precoding 
is a very well investigated area, which can cancel interference if the CSI is available at the transmitter~\cite{Vu2007Precoding}. DPC is a non-linear precoding that achieves 
optimal interference-free transmission by subtracting the potential interference at the transmitter~\cite{CostaDPC1983}, which is well investigated for MU-MIMO channels
~\cite{Cho2010MIMObook}.
In 
multi-user information theory literature, the downlink MU-MIMO channel is modeled as MIMO Gaussian broadcast channel (BC)~\cite{Ma2016MUMIMO}, where the sum-rate capacity grows linearly with the number of spatial-domain degrees of freedom~\cite{Caire2003capacity}. DPC is proven to achieve capacity for MIMO BC channels~\cite{Vishwanath2003Dual,LeeDPC2007,Weingarten2006Capacity}. However, practical implementation of DPC has the great challenge of \textit{very high computational complexity}. At the same time, the power allocation problem is studied in the beamforming literature from linear methods~\cite{Rashid1998Bm,
Schubert2004Bm} to nonlinear DPC~\cite{Schubert2005Bm}.
However, these approaches only solve the problem for a \textit{fixed DPC precoding order}. An inherent problem with DPC is that for every order, the interference coupling matrix has a different structure. Thus finding the optimum precoding order remains a combinatorial problem that is prohibitive, even for moderate numbers of users. A low complexity but sub-optimal method to achieve this 
has been shown in~\cite{Michel2007Order}.  

HOGMT precoding~\cite{Zou2022ICC, ZouTCOM23} is the first method, 
which is capable of cancelling spatial, temporal and joint spatio-temporal interference in multi-user non-stationary channels. This is  achieved 
by transmitting signals on independent flat-fading subchannels (eigenfunctions) in an eigen-domain. As a joint spatio-temporal precoding method for multi-user non-stationary channels, HOGMT generally 
analyzes a 4-D channel tensors. However, if time dimension at the transmitter and the receiver are both collapsed, as it would be LTI channels, it will operate on a 2-D MU-MIMO channel matrix to cancel spatial interference only, which is exactly the same as in DPC for MU-MIMO channels~\cite[Chapter~13]{Cho2010MIMObook}.

In this paper, we 
prove the equivalence between DPC and 2-D HOGMT, since both ensure interference-free communication. 2-D HOGMT precoding is implemented by SVD decomposition~\cite{Zou2022ICC}, which is a linear process. Therefore, the equivalence provides an alternate linear implementation for DPC. Furthermore, we show that the SVD decomposition of a permuted matrix can be obtained by directly permuting the decomposed components (Lemma~\ref{lemma:svd_per}). This property does not hold for the LQ decomposition, commonly used in implementing DPC, because the decomposed triangular matrix cannot preserve its structure after permutation. This difference suggests that conventional method based on LQ decomposition needs to repeat the DPC for each permutation of the channel matrix (or the precoding order), while the proposed alternate method 
requires \textit{only one DPC computation} for any arbitrary order, followed by \textit{permuting the decomposed components} to find the optimal order, avoiding unnecessary iterations. The contributions of this paper are summarized as follows:
\vspace{-2pt}
\begin{itemize}
    \item We show the equivalence between DPC and 2-D HOGMT precoding with effective channel gains, and give an alternate implementation for DPC.
    \item We give a general beamforming optimization method 
    by designing a diagonal matrix 
    according to the target criteria under certain constraints.
    \item We show the difference between SVD and LQ decomposition under the permutation operation and demonstrate the alternative implementation of the DPC is able to optimize precoding order by the diagonal permutation. 
    \item We show the convergence of the proposed method and validate the equivalence by the simulation. 
\end{itemize}
\vspace{-5pt}
\section{Background \& Preliminaries}
\label{sec:pre}
\subsection{DPC for MIMO Broadcast Channels}
Consider a BC channel with $N$ transmit antennas and $N$ single-antenna users (MU-MISO), $\mathbf{H} \in \mathbb{C}^{N\times N}$, where the received signal $\mathbf{y} \in \mathbb{C}^{N \times 1}$ is given by,
\begin{equation}
\label{eq:y}
    \mathbf{y} = \mathbf{H}\mathbf{x} + \mathbf{v}
\end{equation}
where, $\mathbf{x} \in \mathbb{C}^{N \times 1}$ is the precoded signal and $\mathbf{v} \in \mathbb{C}^{N \times 1}$ is AWGN. Computationally, DPC performs LQ decomposition followed by a series of Gram-Schmidt processes~\cite{Cho2010MIMObook}. The channel matrix, $\mathbf{H}$ is decomposed 
as $\mathbf{H} {=} \mathbf{L}\mathbf{Q}$
where, $\mathbf{L} \in \mathbb{C}^{N \times N}$ and $\mathbf{Q}^{N \times N}$ is a triangular and unitary matrix, respectively. Let $\Tilde{\mathbf{x}}{=}[\begin{array}{lll}x_1 , \ldots , x_N\end{array}]^T$ denote the precoded signal for  $\mathbf{s}{=}[\begin{array}{lll}s_1 , \hdots , s_N\end{array}]^T$ to cancel the effect of $\mathbf{L}$. By transmitting $\mathbf{x} {=} \mathbf{Q}^H \Tilde{\mathbf{x}}$, the effect of $\mathbf{Q}$ is cancelled and~\eqref{eq:y} is rewritten as, 
\begin{align}
\mathbf{y} & = \mathbf{H}\mathbf{x} + \mathbf{v} = \mathbf{L} \mathbf{Q} \mathbf{Q}^H \Tilde{\mathbf{x}}+\mathbf{v} \nonumber \\
&=\left[\begin{array}{llll}
l_{11} & 0 & \cdots & 0\\
l_{21} & l_{22} & \cdots & 0\\
\vdots & \vdots & \ddots & \vdots\\
l_{N1} & l_{N2} & \cdots & l_{NN}
\end{array}\right]\left[\begin{array}{c}
\Tilde{x}_1 \\
\Tilde{x}_2 \\
\vdots \\ 
\Tilde{x}_N
\end{array}\right]+\left[\begin{array}{c}
v_1 \\
v_2 \\
\vdots \\
v_N
\end{array}\right]
\label{eq:y1}
\end{align}

Therefore, for $n^\text{th}$ user, there is no interference from users $n'{>}n$ and the interference from users $n'{<}n$ is cancelled by the Gram–Schmidt process as, 
\begin{equation}
\Tilde{x}_n=s_n-\sum_{n'=1}^{n-1} \frac{l_{nn'}}{l_{nn}}\Tilde{x}_{n'} \quad\text{where}, \quad \Tilde{x}_1 = s_1
\label{eq:x1}
\end{equation}
Substituting~\eqref{eq:x1} in~\eqref{eq:y1}, the received signal is given by, 
\begin{align}
    \mathbf{y}&= \mathbf{D_L} \mathbf{s} + \mathbf{v}
\label{eq:y2}
\end{align}
where, $\mathbf{D_L}{=}\operatorname{diag}(\mathbf{L})$
and $l_{nn}$ is the channel gain for user $n$. 

\noindent
\textbf{Multi-antenna users case (MU-MIMO)}: For multi-antenna user case, each row in~\eqref{eq:y1} corresponds to one antenna instead of one user and then each user would incorporate multiple rows as well as multiple elements $\Tilde{\mathbf{x}}_n$ in~\eqref{eq:x1}. For notational simplicity, we use the expression of the single-antennas user case as it does not affect the underlying theory in this paper.

\subsection{HOGMT Precoding}
HOGMT precoding~\cite{Zou2022ICC} cancels the spatial, temporal and joint spatio-temporal interference in a 4-D double-selective channel, modeled as in~\cite{Almers2007ChannelSurvey},
\begin{align}
\label{eq:H_t_tau}
\mathbf{H}(t,\tau) = \begin{bmatrix}
 h_{1,1}(t,\tau) &\cdots & h_{1,u'}(t,\tau)  \\
 \vdots& \ddots &  \\
h_{u,1}(t,\tau) &  & h_{u,u'}(t,\tau)
\end{bmatrix}
\end{align}
where, $h_{u,u'}(t, t')$ is the multi-user time-varying impulse response. Then the received signal is given by, 
\begin{align}
    \label{eq:MU2}
    & r(u,t) = \iint k_H(u,t;u',t') s(u',t')~du'~dt' + v(u,t)
\end{align}
where, $v(u,t)$ is AWGN, $s(u,t)$ is the data symbol and $k_{u,u'} (t, t') {=} h_{u,u'}(t, t{-}t')$ is the 4-D channel kernel~\cite{Matz2005NS,MATZ20111}.

HOGMT decomposition is the first method to decompose a 4-D channel kernels as follows,
\begin{align}
\label{eq:thm2_decomp}
&k_H(u,t;u',t') = \sum\nolimits_{n{=1}}^N  \sigma_n \psi_n(u,t) \phi_n(u',t')
\end{align}
with orthonormal properties as in \eqref{eq:thm2_decomp_pty},
\begin{align}
\label{eq:thm2_decomp_pty}
\begin{aligned}
& {\langle} \psi_n(u{,}t) {,} \psi_{n'}^*(u{,}t) {\rangle} = \delta_{nn'} \\
& {\langle} \phi_n(u{,}t) {,} \phi_{n'}^*(u{,}t) {\rangle} = \delta_{nn'}
\end{aligned}
\end{align}
Both \eqref{eq:thm2_decomp} and~\eqref{eq:thm2_decomp_pty} show that the 4-D channel kernel is decomposed into jointly orthogonal subchannels (eigenfunctions). Then the precoded signal $x(u,t)$ based on HOGMT is derived by combining the jointly orthogonal eigenfunctions with the desired coefficients $x_n$ as,
\begin{align}
\label{eq:x}
    x(u{,}t) {=} \sum_{n{=}1}^N x_n \phi_n^*(u{,}t) 
    \text{ where, } 
    x_n {=} \frac{{\langle} s(u{,}t){,} \psi_n(u{,}t) {\rangle}}{\sigma_n}
\end{align}
Transmitting $x(u,t)$ over the channel, the received signal is directly the combination of data signal and noise without complementary post-coding step as $r(u,t) = s(u,t) + v(u,t)$. It shows that HOGMT precoding can achieves interference-free communication for multi-user non-stationary channels.

\section{Equivalence of DPC and HOGMT precoding}
\label{sec:intro}

DPC achieves capacity for MU-MIMO BC channels but is a non-linear precoding with impractical complexity. On the contrary, HOGMT achieves the same interference-free communication for multidimensional non-stationary channels and has a linear implementation. If there exists equivalence between them, then we can use it as an alternate implementation for DPC for practical system implementation. 
\begin{theorem}
(DPC and 2-D HOGMT precoding with effective channel gains are mathematically equivalent)

Given a channel matrix $\mathbf{H}$ with entries $h(u,u')$ and data symbols $\mathbf{s} {=} [s_1,{...},s_N]^T$, the 2-D HOGMT precoded signal is,

\begin{align}
x(u) = \sum_n^{N} x_n \phi_n^*(u) \text{ where, } x_n =  \frac{\langle s(u), \psi_n(u) \rangle}{\sigma_n}
\label{eq:2-Dhogmt}
\end{align}
\text{where} $\sigma_n$, $\phi_n(u)$ and $\psi_n(u)$ are given by 2-D HOGMT decomposition as $h(u,u') {=} \sum_n^N \sigma_n \psi_n(u) \phi_n(u')$~\cite{Zou2022ICC}. 

Then the DPC precoded signal is given by 
\begin{align}
x(u) = \sum_n^{N} x_n \phi_n^*(u) \text{ where, } x_n =  \frac{\langle l(u), s(u), \psi_n(u) \rangle}{\sigma_n}
    \label{eq:x_u}
\end{align}
where, $l(u)$ is the continuous 
diagonal element of $\mathbf{D_L}$ in~\eqref{eq:y2}.
\label{thm:thm1}
\end{theorem}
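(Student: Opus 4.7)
The plan is to show that the transmit signal produced by the DPC procedure in (3)--(5) coincides with the right-hand side of (10), i.e., with the 2-D HOGMT precoded signal evaluated at the ``effective data'' $l(u)s(u)$ rather than the original data $s(u)$.

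First, I would re-read (6) as an identification of the linear input-output relation that DPC implicitly enforces. Since (1) always gives $\mathbf{y} = \mathbf{H}\mathbf{x} + \mathbf{v}$ and (6) shows that under DPC this equals $\mathbf{D_L}\mathbf{s} + \mathbf{v}$, the DPC transmit vector must satisfy $\mathbf{H}\mathbf{x} = \mathbf{D_L}\mathbf{s}$. In the continuous notation used for HOGMT this reads $\int h(u,u')\, x(u')\, du' = l(u)s(u)$. Thus DPC is equivalent to solving a linear inverse problem whose right-hand side is the effective data $l(u)s(u)$, not $s(u)$ as in plain HOGMT.

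Second, I would plug in the 2-D HOGMT expansion $h(u,u') = \sum_n \sigma_n \psi_n(u)\phi_n(u')$ and exploit the orthonormality in (8). The operator with kernel $h$ then admits the inverse kernel $h^{-1}(u,u') = \sum_n \sigma_n^{-1}\phi_n^*(u)\psi_n^*(u')$, and applying it to the effective data yields
\[
x(u) \;=\; \int h^{-1}(u,u'')\, l(u'')s(u'')\, du'' \;=\; \sum_{n=1}^{N} \frac{\langle l(u)s(u),\,\psi_n(u)\rangle}{\sigma_n}\,\phi_n^*(u),
\]
which is exactly the claimed expression (10). Comparing with (9), the only change relative to standard 2-D HOGMT precoding of $s$ is the substitution $s(u)\mapsto l(u)s(u)$, i.e., the inclusion of the effective channel gains produced by LQ-based DPC.

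The main obstacle is the simultaneous bookkeeping of two different factorizations of the same $\mathbf{H}$: the LQ factorization that yields $l(u)$ through the Gram--Schmidt step (5), and the SVD/HOGMT factorization that yields the eigenfunctions $\psi_n,\phi_n$ and singular values $\sigma_n$. I need to justify that these two views are compatible, namely that the relation $\mathbf{H}\mathbf{x} = \mathbf{D_L}\mathbf{s}$ identified from the DPC procedure is well-posed and that its unique solution is recovered by applying the HOGMT pseudoinverse. Under invertibility of $\mathbf{H}$, which is the standing assumption in the DPC setting, this is immediate, and the remainder of the argument reduces to orthonormal-expansion arithmetic using (7)--(8).
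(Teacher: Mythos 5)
Your proof is correct and rests on the same core observation as the paper's: since the net effect of DPC is $\mathbf{y}=\mathbf{D_L}\mathbf{s}+\mathbf{v}$, the DPC precoder must realize $\mathbf{x}=\mathbf{H}^{-1}\mathbf{D_L}\mathbf{s}$, and expressing $\mathbf{H}^{-1}$ through the HOGMT/SVD eigenfunctions turns this into the 2-D HOGMT formula with $s(u)$ replaced by $l(u)s(u)$. Where you differ is in the middle: the paper never writes $\mathbf{H}^{-1}$ directly, but instead takes the SVD of the triangular factor, $\mathbf{L}=\mathbf{U_L}\mathbf{\Sigma_L}\mathbf{V_L}^H$, and establishes the identification $\mathbf{U}\equiv\mathbf{U_L}$, $\mathbf{\Sigma}\equiv\mathbf{\Sigma_L}$, $\mathbf{V}\equiv\mathbf{Q}^H\mathbf{V_L}$ before arriving at $\mathbf{W}=\mathbf{V}\mathbf{\Sigma}^{-1}\mathbf{U}^H\mathbf{D_L}$. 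Your kernel-inverse route ($h^{-1}(u,u')=\sum_n\sigma_n^{-1}\phi_n^*(u)\psi_n^*(u')$ applied to the effective data $l(u)s(u)$) is cleaner and dispenses with the ``two factorizations'' bookkeeping you worried about, since $\mathbf{Q}^H\mathbf{L}^{-1}=(\mathbf{LQ})^{-1}=\mathbf{H}^{-1}$ makes the compatibility automatic under the standing invertibility assumption. What the paper's longer route buys is the explicit map between the DPC components $(\mathbf{L},\mathbf{Q})$ and the HOGMT components $(\mathbf{U},\mathbf{\Sigma},\mathbf{V})$, which is reused later (it underlies Figure~\ref{fig:eq} and the permutation argument of Lemma~\ref{lemma:svd_per}); your version proves the theorem itself just as well but would need that identification restated separately for the later results.
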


\begin{proof}
Let $\mathbf{x} {=} \mathbf{W}\mathbf{s}$, where $\mathbf{W}$ is the precoding matrix. Then we can write, 
\begin{equation}
    \mathbf{y} = \mathbf{H}\mathbf{x} + \mathbf{v} = \mathbf{L}\mathbf{Q} \mathbf{W}\mathbf{s} +\mathbf{v}
    \label{eq:y3}
\end{equation}

Substituting~\eqref{eq:y2} in~\eqref{eq:y3}, we have $\mathbf{W} {=} \mathbf{Q}^H\mathbf{L}^{-1}\mathbf{D_L}$. Decomposing $\mathbf{L}$ by SVD as $\mathbf{L} = \mathbf{U_L}\mathbf{\Sigma} \mathbf{V}_L^H$ we get,
 \begin{equation}
     \mathbf{W} =  \mathbf{Q}^H \mathbf{V_L}\mathbf{\Sigma_L}^{-1}\mathbf{U}_\mathbf{L}^{H} \mathbf{D_L} \label{prof_1}
 \end{equation}
Meanwhile, the SVD of $\mathbf{H}$ can be also represented by the SVD of $\mathbf{L}$ as,
\begin{align}
    \mathbf{H} & = \mathbf{L}\mathbf{Q} = (\mathbf{U_L}\mathbf{\Sigma_L}\mathbf{V_L}^{H})\mathbf{Q} \nonumber\\
    & = \mathbf{U_L}\mathbf{\Sigma_L}(\mathbf{Q}^H \mathbf{V_L})^{H} \equiv \mathbf{U} \mathbf{\Sigma} \mathbf{V}^{H} 
\end{align}

Therefore we have the following equivalence,
\begin{equation}
\mathbf{U} \equiv \mathbf{U_L}, \  \mathbf{\Sigma} \equiv \mathbf{\Sigma_L} \text{ and } \mathbf{V} \equiv \mathbf{Q}^H \mathbf{V_L}
\label{eq:eqv}
\end{equation}
Substituting \eqref{eq:eqv} in \eqref{prof_1} and noting that $\mathbf{Q}$ is unitary,
\begin{figure}
\centering
\includegraphics[width=\linewidth]{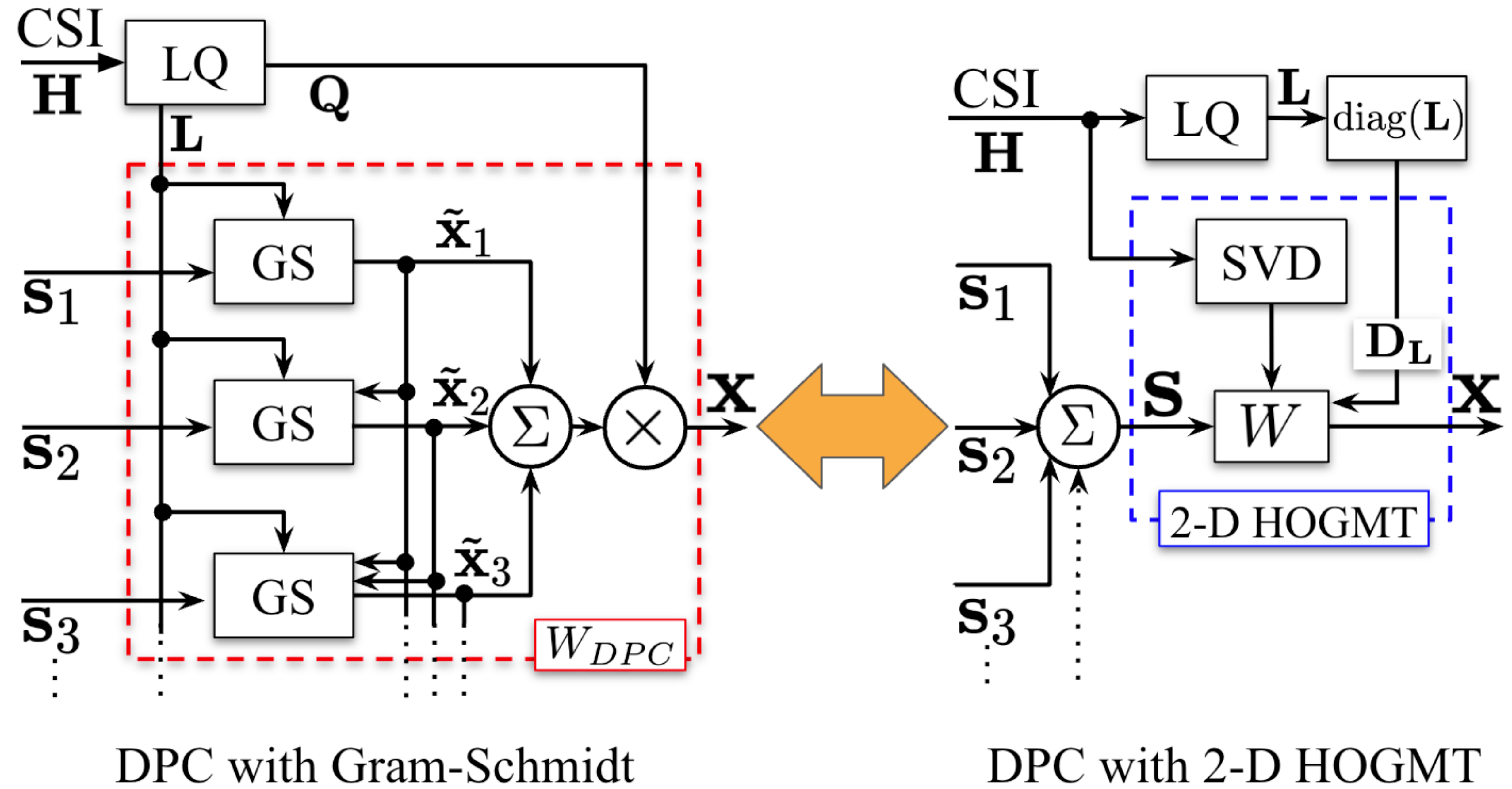}
\caption{A low-complexity implementation of DPC}
  \label{fig:eq}
\end{figure}
\begin{align}
\mathbf{W} =  \mathbf{V} \mathbf{\Sigma}^{-1}\mathbf{U}^{H} \mathbf{D_L} \label{eq:W}
\end{align}
Then the transmitted symbol $\mathbf{x}$ is given by,
\begin{align}
\label{eq:xvec}
    &\mathbf{x} {=} \mathbf{Ws} {=} \mathbf{V} \mathbf{\Sigma}^{-1}\mathbf{U}^{H} \mathbf{D_L} \mathbf{s}
\end{align}
Note that the precoded symbol for user $u$ is the $u^\text{th}$ row of $\mathbf{x}$. By expanding~\eqref{eq:xvec}, we have
\begin{align}
    &x_u {=} \sum_n^N v_{un} \underbrace{\sigma_n^{-1} \sum_u^N l_u s_u \mu_{un}^*}_{x_n}
\label{eq:x_row}
\end{align}
where, $v_{un}$ and $\mu_{un}$ are the elements of $\mathbf{V}$ and $\mathbf{U}$ respectively and $\sigma_n$ is $n^\text{th}$ diagonal element of $\mathbf{\Sigma}$. 

Now, rewriting~\eqref{eq:x_row} using two arbitrary continuous-time complex functions, $\phi_n^*(u)$ and $\psi_n^*(u)$ we get,
\begin{align}
x(u) = \sum_n^{N} x_n \phi_n^*(u) \text{ where, } x_n =  \frac{\langle l(u), s(u), \psi_n(u) \rangle}{\sigma_n}
\label{eq:dpc1}
\end{align}
where, $x(u)$, $s(u)$, and $l(u)$ is the continuous form of $\mathbf{x}$, $\mathbf{s}$ and $\{l_u\}$, respectively.

Meanwhile, the continuous form of SVD of $\mathbf{H}$, yields the two eigenfunctions, $\phi$ and $\psi$ according to the the 2-D HOGMT decomposition in~\eqref{eq:kuu} by collapsing time dimension in~\eqref{eq:thm2_decomp}, 
\begin{equation}
    k(u,u') = \Sigma_n \sigma_n \phi_n(u) \psi(u')
    \label{eq:kuu}
\end{equation}
Therefore, we have the 2-D form of~\eqref{eq:x} as, 
\begin{align}
x(u) = \sum_n^{N} x_n \phi_n^*(u), \text{ where, } x_n =  \frac{\langle s(u), \psi_n(u) \rangle}{\sigma_n}
\label{eq:hogmt}
\end{align}
Therefore, observing the similarity of \eqref{eq:dpc1} and~\eqref{eq:hogmt} we find that DPC is mathematically same as 2-D HOGMT precoding after scaling by the effective gain, $l(u)$.
\end{proof}
Figure~\ref{fig:eq} illustrates the equivalence shown in Theorem~\ref{thm:thm1} to provide an 
alternate implementation of DPC using HOGMT, as in~\eqref{eq:W}. Note that the non-linearity of DPC is due to the iterative feedback required by the 
Gram–Schmidt process as shown in Figure \ref{fig:eq}
Therefore, because of the equivalence and the linear implementation of 2-D HOGMT precoding by SVD provides significant computational advantage in practical implementation of DPC in MU-MIMO channels.

\noindent
\subsection{Beamformer optimization} 
From Theorem~\ref{thm:thm1}, the beamformer is obtained by designing an optimal pre-equalizer, $b(u)$ for the $u^\text{th}$ user. Then $x_n$ in~\eqref{eq:x_u} can be expressed as  
\begin{equation}
x_n =  \frac{\langle b(u), l(u), s(u), \psi_n(u) \rangle}{\sigma_n}
\end{equation} 
Specifically, if $b(u) = 1/{l(u)}$, then the DPC implemented using~\eqref{eq:x_u} is numerically equal to 2-D HOGMT in~\eqref{eq:2-Dhogmt}. Let $k(u) = b(u) l(u)$, which is the desired effective gain, and denote the discrete form of $k(u)$ as diagonal matrix $\mathbf{K}$, then~\eqref{eq:W} with beamformer is given by,

\begin{equation}
    \mathbf{W} = \mathbf{V} \mathbf{\Sigma}^{-1}\mathbf{U}^{H} \mathbf{K}
    \label{eq:W2}
\end{equation}
Then the beamformer design is to find a diagonal matrix $\mathbf{K}$ to replace $\mathbf{D_L}$ in~\eqref{eq:W}. The optimal $\mathbf{K}$ is obtained by the objective function $f(\cdot)$ under the power constraint $P$ as follows: 

\begin{equation}
\begin{aligned}
    \argmax_{\mathbf{K}} \quad & f(\mathbf{K})\\
    \text{s.t.} \quad & tr(\mathbf{W}\mathbf{W}^H) \leq P
\end{aligned}
\label{eq:opK}
\end{equation}

\section{Precoding order optimization for DPC}

\subsection{Optimum precoding order}

\begin{figure}
    \centering    \includegraphics[width=0.7\linewidth]{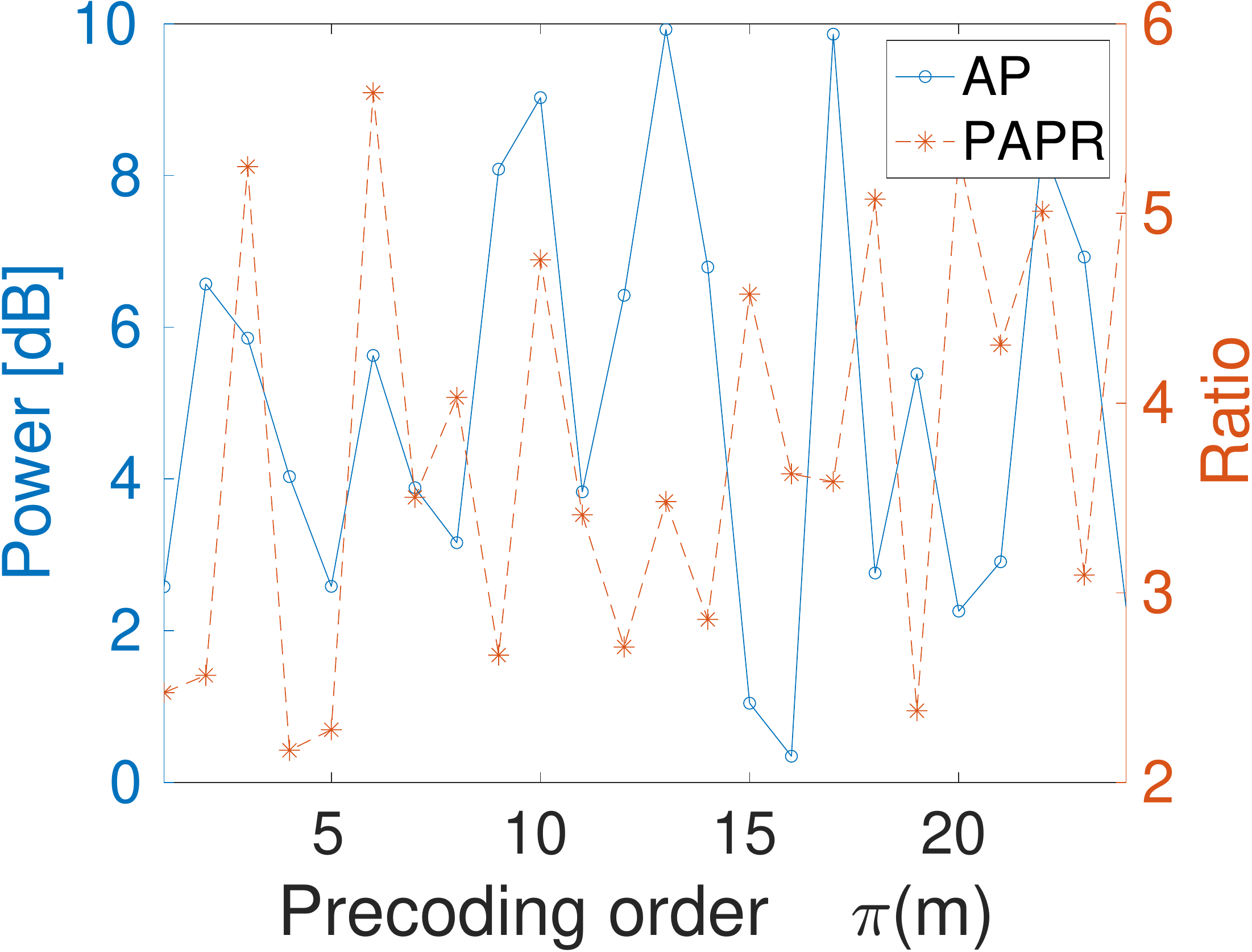}
    \caption{AP and PAPR for different precoding orders}
    \label{fig:Power}
\end{figure}

\begin{figure*}
\centering
\begin{subfigure}{.23\textwidth}
  \centering
\includegraphics[width=1\linewidth]{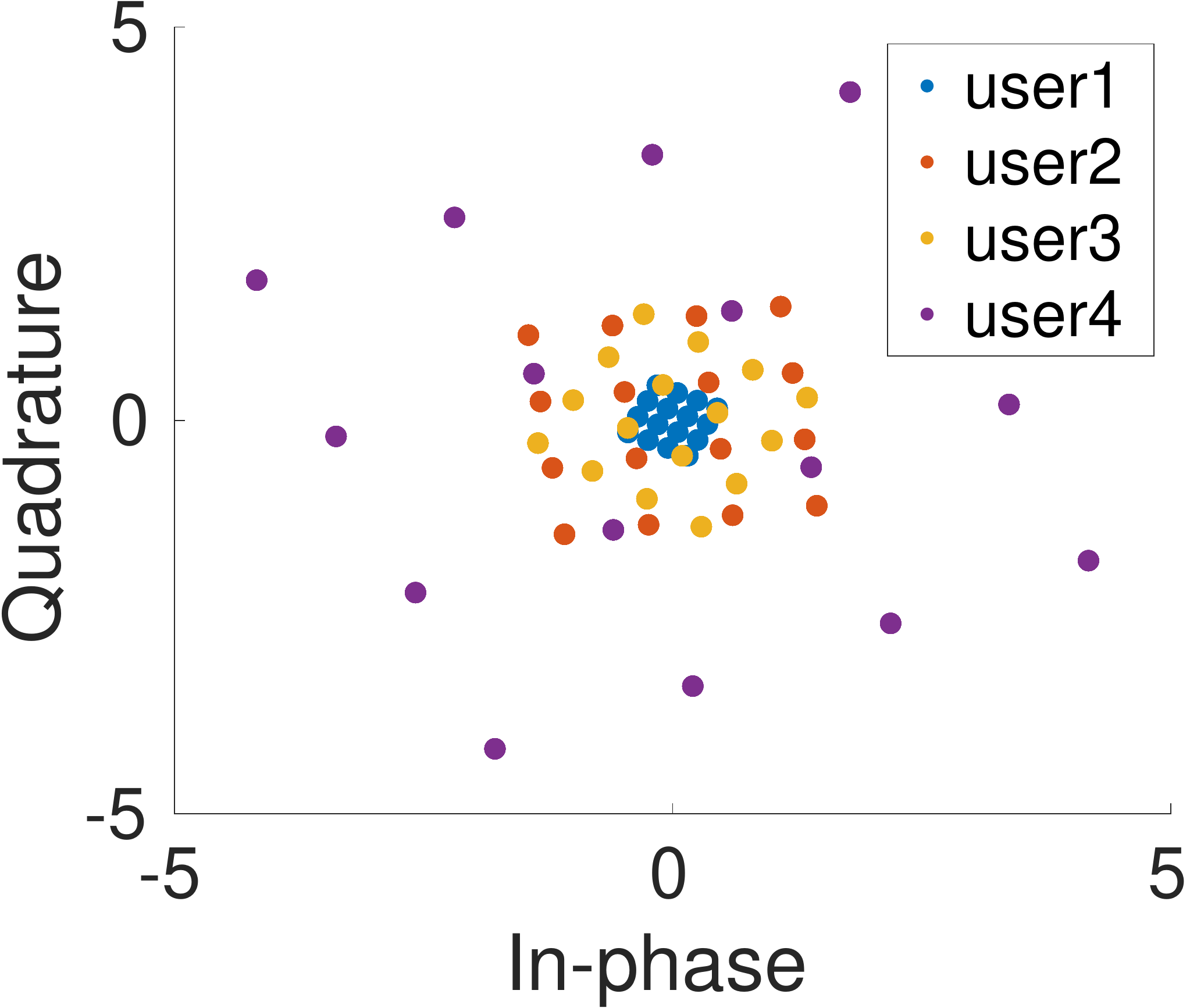}
  \caption{Maximal PAPR order, $m=6$}
  \label{fig:a}
\end{subfigure}
\quad
\begin{subfigure}{.23\textwidth}
  \centering
  \includegraphics[width=1\linewidth]{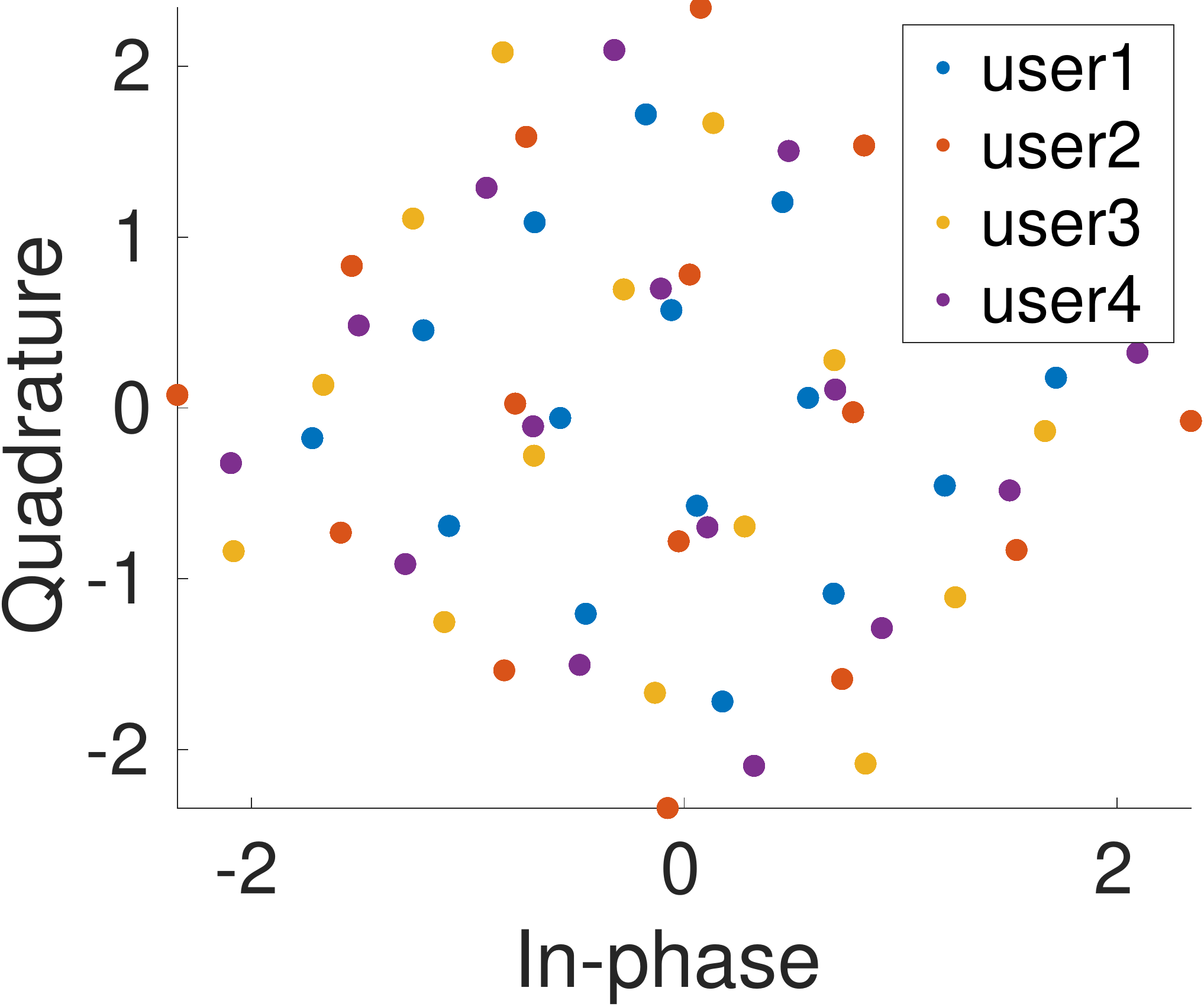}
  \caption{Minimal PAPR order, $m=4$}
  \label{fig:b}
\end{subfigure}
\quad
\begin{subfigure}{.23\textwidth}
  \centering
  \includegraphics[width=1\linewidth]{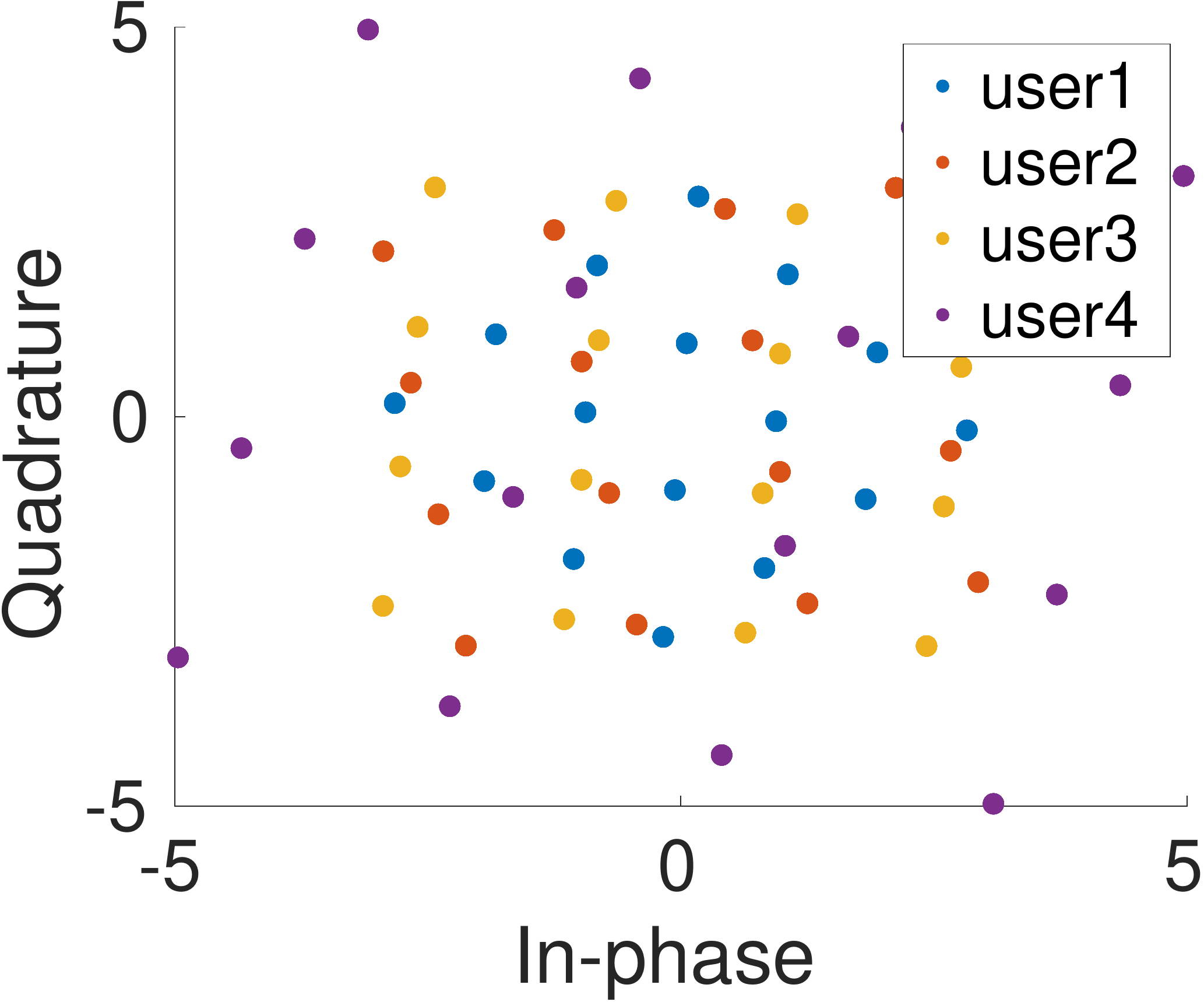}
  \caption{Maximal AP order, $m=13$
  } 
  \label{fig:c}
\end{subfigure}
\quad
\begin{subfigure}{.23\textwidth}
  \centering
  \includegraphics[width=1\linewidth]{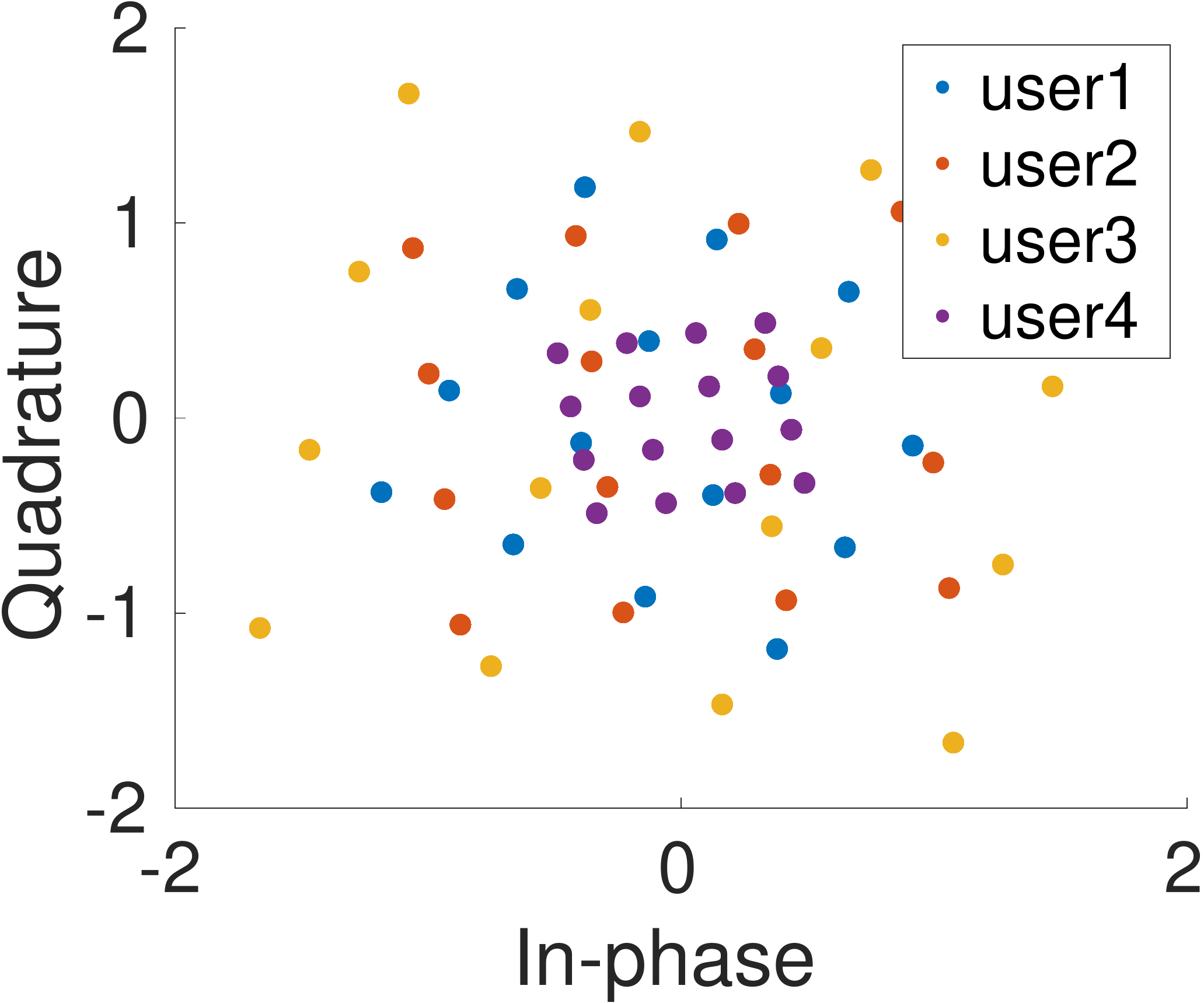}
  \caption{Minimal AP order, $m=16$}
  \label{fig:d}
\end{subfigure}
\caption{16-QAM constellation of 4-user DPC with different precoding order, $\pi(m)$ and power constraints}
\label{fig:4 orders}
\vspace{-10pt}
\end{figure*}

DPC treats each user as one layer and iteratively precodes on previous layers by treating the interference from previous layers as \textit{dirty}. This process is widely termed as \textit{writing on dirty paper}. Since each user channel is different, the order of these layers (users) affects the precoded signal~\cite{Michel2007Order}.

Figure~\ref{fig:Power} shows the Average Power (AP) and Peak-to-Average Power Ratio (PAPR) of DPC with 4 users for different precoding orders. The total number of precoding orders is $4! {=} 24$. We observe that AP and PAPR are varies with each precoding order. Figure~\ref{fig:4 orders} shows the constellation diagram for DPC with minima AP, maximal AP, minimal PAPR and maximal PAPR precoding orders, where the scatter plot of each layer is \textit{writing}     on previous layers. The gap between scatter plots of layers is largest in Figure~\ref{fig:a} and smallest in Figure~\ref{fig:b}, as the corresponding precoding order gives the maximal PAPR and minimal PAPR respectively. Figure~\ref{fig:c} has the maximal boundary for scatter plots while Figure~\ref{fig:d} has the minimal boundary, which suggests maximal AP and minimal AP respectively. 

The complexity of searching the optimal DPC order is known to be $O(N^3N!)$~\cite{Mao2020DPC}. The number of possible DPC precoding order is $N!$ for a $N$-user case. Now, for each order, DPC is repeated and then the precoded signal is compared based on the given constraint criteria to find the optimal order, which is extremely expensive computationally.

However, from Theorem~\ref{thm:thm1} and Figure \ref{fig:eq}, we find that the precoded matrix consists of the components from the SVD and LQ decomposition only. The relation of precoded matrix $\mathbf{W}$ and the permutation of channel matrix $\mathbf{H}$ is given by Lemma\ref{lemma:svd_per}.

\begin{lemma}
\label{lemma:svd_per}
Let $\mathbf{H}_{\pi(m)}$ be the permutation, by order ${\pi(m)}$ of a given channel matrix with SVD, $\mathbf{H}{=}\mathbf{U}\mathbf{\Sigma} \mathbf{V}^H$. Then the SVD of $\mathbf{H}_{\pi(m)}$ is given by permuting $\mathbf{U}$ by the same order ${\pi(m)}$, 
\begin{equation}
    \mathbf{H}_{\pi(m)} = \mathbf{U}_{\pi(m)} \mathbf{\Sigma} \mathbf{V}^H \label{eq:H_pi}
\end{equation}
\end{lemma}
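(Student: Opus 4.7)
The plan is to realize the permutation as left multiplication by a permutation matrix and then push that matrix through the existing SVD. Let $\mathbf{P}_{\pi(m)}$ denote the $N{\times}N$ permutation matrix associated with the order $\pi(m)$, so that by the convention in the paper (permuting users, i.e.\ rows of $\mathbf{H}$) we have $\mathbf{H}_{\pi(m)} = \mathbf{P}_{\pi(m)}\mathbf{H}$. Substituting the SVD $\mathbf{H}=\mathbf{U}\mathbf{\Sigma}\mathbf{V}^{H}$ gives
\begin{equation}
\mathbf{H}_{\pi(m)} = \mathbf{P}_{\pi(m)}\mathbf{U}\mathbf{\Sigma}\mathbf{V}^{H} = \bigl(\mathbf{P}_{\pi(m)}\mathbf{U}\bigr)\mathbf{\Sigma}\mathbf{V}^{H},
\end{equation}
and it only remains to identify $\mathbf{U}_{\pi(m)} := \mathbf{P}_{\pi(m)}\mathbf{U}$ as the left singular-vector matrix in a bona fide SVD.

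The second step is to verify that the factorization on the right is an SVD in the strict sense. Permutation matrices are real and orthogonal, so $\mathbf{P}_{\pi(m)}\mathbf{P}_{\pi(m)}^{H}=\mathbf{I}$; since $\mathbf{U}$ is unitary, the product $\mathbf{P}_{\pi(m)}\mathbf{U}$ is also unitary. The middle factor $\mathbf{\Sigma}$ is untouched, so its diagonal entries remain non-negative and in the original (e.g.\ descending) order, and $\mathbf{V}^{H}$ is unchanged. Hence $(\mathbf{P}_{\pi(m)}\mathbf{U},\mathbf{\Sigma},\mathbf{V})$ satisfies all the defining properties of an SVD of $\mathbf{H}_{\pi(m)}$. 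Finally, note that left-multiplying $\mathbf{U}$ by $\mathbf{P}_{\pi(m)}$ permutes its \emph{rows} in the order $\pi(m)$, which is exactly the notation $\mathbf{U}_{\pi(m)}$ in the statement.

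There is no real obstacle beyond fixing conventions; the only subtle point is the uniqueness caveat inherent to the SVD. If $\mathbf{\Sigma}$ has repeated singular values, the singular vectors are defined only up to a unitary rotation within each eigenspace, so the claim should be read as exhibiting \emph{an} SVD of $\mathbf{H}_{\pi(m)}$ rather than \emph{the} unique one; this is precisely what the application in the paper needs, because the downstream formula $\mathbf{W}=\mathbf{V}\mathbf{\Sigma}^{-1}\mathbf{U}^{H}\mathbf{K}$ only requires a consistent SVD pair, and row-permuting $\mathbf{U}$ while leaving $\mathbf{\Sigma}$ and $\mathbf{V}$ intact gives exactly such a pair at the cost of a single matrix-level row shuffle, avoiding a full re-decomposition.
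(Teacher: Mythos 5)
Your proof is correct and follows essentially the same route as the paper: both realize the permutation as left multiplication by a permutation matrix (the paper calls it $\mathbf{G}_{\pi(m)}$, you call it $\mathbf{P}_{\pi(m)}$), push it through the SVD, and verify that the product with $\mathbf{U}$ remains unitary. Your added remark on the non-uniqueness of the SVD is a sensible clarification but does not change the argument.
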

\begin{proof}
Given a matrix $\mathbf{G}$ consisting of basis vectors $\{\mathbf{e}_n\}$,
\begin{equation}
    \mathbf{G} = \left[\begin{array}{c}
\mathbf{e}_1 \\
\mathbf{e}_2 \\
\vdots \\ 
\mathbf{e}_N
\end{array}\right] =\left[\begin{array}{llll}
1 & 0 & \ldots & 0\\
0 & 1 & \ldots & 0\\
\vdots & \vdots & \ddots & \vdots\\
0 & 0 & \ldots & 1
\end{array}\right]
\end{equation}
if $\{\mathbf{e}_n\}$ is rearranged by a given order $\pi(m)$, we get the permutation operator $\mathbf{G}_{\pi(m)}$. Then permutation of a matrix by  order $\pi(m)$ can seen as the multiplication by $\mathbf{G}_{\pi(m)}$. Thus the permuted matrix $\mathbf{H}_{\pi(m)}$ is given by

\begin{align}
    \mathbf{H}_{\pi(m)} & = \mathbf{G}_{\pi(m)} \mathbf{H} = \mathbf{G}_{\pi(m)}\mathbf{U} \mathbf{\Sigma} \mathbf{V}^{H}  \label{eq:SVD_1}
\end{align}

Note that $(\mathbf{G}_{\pi(m)}\mathbf{U})(\mathbf{G}_{\pi(m)}\mathbf{U})^H {=} \mathbf{G}_{\pi(m)} \mathbf{U} \mathbf{U}^H \mathbf{G}_{\pi(m)}^H = \mathbf{G}_{\pi(m)} \mathbf{I} \mathbf{G}_{\pi(m)}^H = \mathbf{I}$, implying $(\mathbf{G}_{\pi(m)}\mathbf{U})$ is an unitary matrix. Let $\mathbf{U}_{\pi(m)} = \mathbf{G}_{\pi(m)}\mathbf{U}$, then~\eqref{eq:SVD_1} can be rewritten as 
\begin{align}
    \mathbf{H}_{\pi(m)} & = \mathbf{U}_{\pi(m)} \mathbf{\Sigma} \mathbf{V}^{H} \label{eq:SVD_2} 
\end{align}
\end{proof}
\noindent
\subsection{Permutation on SVD and LQ decomposition} Lemma\ref{lemma:svd_per} shows that the SVD of the permuted matrix $\mathbf{H}_{\pi(m)}$ can be represented by the linear combination of permutation operator $\mathbf{G}_{\pi(m)}$ and SVD of $\mathbf{H}$, i.e.,
\begin{equation}
    \text{SVD}(\mathbf{H}_{\pi(m)}) = \mathbf{G}_{\pi(m)}\text{SVD}(\mathbf{H})
\end{equation}
This shows the linearity of SVD with respect to permutation. However, it is not the same for LQ decomposition, as the triangular matrix $\mathbf{L}$ is unable to maintain its triangular structure after permutation and since, $\mathbf{G}_{\pi(m)}\mathbf{L}$ is not a triangular matrix the LQ decomposition of a permuted matrix can not be obtained by permuting the decomposed component. Therefore,
\begin{equation}
    \text{LQ}(\mathbf{H}_{\pi(m)}) \neq \mathbf{G}_{\pi(m)}\text{LQ}(\mathbf{H})
    \label{eq:nl}
\end{equation}
\eqref{eq:nl} shows the non-linearity of LQ decomposition with respect to permutation, which requires conventional DPC to repeat LQ decomposition for each permutation of channel matrix $\mathbf{H}$ to find the optimal order for precoding   .  

From Lemma~\ref{lemma:svd_per}, we find that permutation of $\mathbf{H}$ 
only changes the order of the elements of 
$\mathbf{U}$. Therefore, using \eqref{eq:W2}, the precoding matrix under permutation, $\mathbf{W_m}$ is, 
\begin{equation}
    \mathbf{W}_m = \mathbf{V} \Sigma^{-1}\mathbf{U}_{\pi(m)}^{H} \mathbf{K} \label{eq:W1}
\end{equation}

Thus the optimal order can be obtained by an one-time DPC for an arbitrary order using 2-D HOGMT and then comparing each order by permuting the unitary matrix $\mathbf{U}$ and data signal $\mathbf{s}$. Thus we have the Theorem~\ref{thm:thm2}. 

\begin{theorem}
\label{thm:thm2}
The optimal precoding order for DPC with effective channel gains $\mathbf{K}$ is obtained by permuting the diagonal elements of $\mathbf{K}$.
\end{theorem}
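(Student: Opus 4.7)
The plan is to combine Lemma~\ref{lemma:svd_per} with the beamformed precoder in~\eqref{eq:W1} and show that any reordering of users can be absorbed entirely into a permutation of the diagonal of $\mathbf{K}$, leaving $\mathbf{U}$, $\mathbf{\Sigma}$, and $\mathbf{V}$ untouched. First I would start from $\mathbf{W}_m = \mathbf{V}\mathbf{\Sigma}^{-1}\mathbf{U}_{\pi(m)}^{H}\mathbf{K}_m$ in~\eqref{eq:W1} and substitute $\mathbf{U}_{\pi(m)} = \mathbf{G}_{\pi(m)}\mathbf{U}$ from Lemma~\ref{lemma:svd_per}, so that $\mathbf{U}_{\pi(m)}^{H} = \mathbf{U}^{H}\mathbf{G}_{\pi(m)}^{H}$. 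Since reordering the rows of $\mathbf{H}$ by $\pi(m)$ also relabels the users, the data vector in the new order is $\mathbf{s}_m = \mathbf{G}_{\pi(m)}\mathbf{s}$. Forming the precoded symbol $\mathbf{x}_m = \mathbf{W}_m \mathbf{s}_m$ then produces the sandwich $\mathbf{G}_{\pi(m)}^{H}\mathbf{K}_m\mathbf{G}_{\pi(m)}$ in the middle, which, because $\mathbf{K}_m$ is diagonal and $\mathbf{G}_{\pi(m)}$ is a permutation matrix, is again diagonal with its entries being exactly the diagonal of $\mathbf{K}_m$ reshuffled by $\pi(m)$.

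Next I would collect terms to obtain $\mathbf{x}_m = \mathbf{V}\mathbf{\Sigma}^{-1}\mathbf{U}^{H}\mathbf{K}'\mathbf{s}$, where $\mathbf{K}'$ is a diagonal matrix whose entries are a permutation of the entries of $\mathbf{K}$. Comparing with $\mathbf{x} = \mathbf{V}\mathbf{\Sigma}^{-1}\mathbf{U}^{H}\mathbf{K}\mathbf{s}$ from~\eqref{eq:W2}, the only difference between the original and the order-$\pi(m)$ precoded signals is the ordering of the diagonal entries of $\mathbf{K}$. Thus, with $\mathbf{U}, \mathbf{\Sigma}, \mathbf{V}$ fixed (computed once from $\mathbf{H}$), the whole family $\{\mathbf{W}_m\}_{m=1}^{N!}$ of precoders indexed by precoding orders is in one-to-one correspondence with the $N!$ permutations of the diagonal of a single $\mathbf{K}$. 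Consequently, the combined optimization~\eqref{eq:opK} over $(\pi(m), \mathbf{K})$ collapses to an optimization over permutations of the diagonal entries of $\mathbf{K}$, proving the claim and also explaining the complexity reduction: one SVD (or equivalently one DPC for an arbitrary reference order) plus an $N!$-term combinatorial search, instead of $N!$ repetitions of LQ decomposition with Gram–Schmidt feedback as in Figure~\ref{fig:eq}.

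The main obstacle is reconciling the two roles of $\mathbf{K}$ in the paper. In Theorem~\ref{thm:thm1}, $\mathbf{K}$ coincides with $\mathbf{D_L}=\operatorname{diag}(\mathbf{L})$ from LQ decomposition, which is genuinely order-dependent (cf.~\eqref{eq:nl}), so naively one might worry that permuting $\pi(m)$ forces a recomputation of $\mathbf{K}_m$ through LQ. The resolution, which I would state carefully, is that once the beamforming freedom in~\eqref{eq:W2} is exploited, $\mathbf{K}$ is a free diagonal design variable constrained only by the power budget in~\eqref{eq:opK}; all residual order-dependence of $\mathbf{W}_m$ flows through $\mathbf{U}_{\pi(m)}$ only, and the sandwich identity above absorbs that dependence into a reshuffling of the diagonal entries of $\mathbf{K}$. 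Hence the search for the optimal DPC order reduces to a \emph{diagonal permutation} of $\mathbf{K}$, as stated in Theorem~\ref{thm:thm2}.
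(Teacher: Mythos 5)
Your proposal is correct and follows essentially the same route as the paper: you substitute $\mathbf{U}_{\pi(m)}=\mathbf{G}_{\pi(m)}\mathbf{U}$ from Lemma~\ref{lemma:svd_per} into $\mathbf{x}_m=\mathbf{W}_m\mathbf{G}_{\pi(m)}\mathbf{s}$ and obtain the same sandwich identity $\mathbf{x}_m=\mathbf{V}\mathbf{\Sigma}^{-1}\mathbf{U}^{H}(\mathbf{G}_{\pi(m)}^{H}\mathbf{K}\mathbf{G}_{\pi(m)})\mathbf{s}$ that the paper derives in~\eqref{eq:Xtild}, identifying the middle factor as a diagonal permutation of $\mathbf{K}$. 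Your added remark on why $\mathbf{K}$ can be treated as an order-independent design variable (rather than the order-dependent $\mathbf{D_L}$) is a useful clarification but does not change the argument.
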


\begin{proof}

The DPC precoding order is optimized as follows,
\begin{equation}
\begin{aligned}
     \argmin_{\{\pi(m)\}}\quad &
     g(\mathbf{x}_m)\\
    \text{s.t.} \quad & \mathbf{H}_{\pi(m)}\mathbf{W}_m = \mathbf{K}_{\pi(m)}
\end{aligned}
\label{eq:g}
\end{equation}
where $\mathbf{x}_m =  \mathbf{W}_m  \mathbf{s}_{\pi(m)}$, and $\mathbf{s}_{\pi(m)} = \mathbf{G}_{\pi(m)} \mathbf{s}$ is the permutation of data signal $\mathbf{s}$ by order $\pi(m)$. $g(\cdot)$ is the objective function according to the given criteria such as minimal AP, minimal PAPR, etc.

\noindent
\textit{Remark 1:}
(Diagonal permutation) Given a permutation operator $\mathbf{G}_{\pi(m)}$, which permutes the rows of matrix by the order ${\pi(m)}$, for a diagonal matrix $\mathbf{D}$, permuting the diagonal elements on the diagonal direction by the order ${\pi(m)}$ is, 
\begin{equation}
\begin{aligned}
    \mathbf{D}_{\pi(m)} = (\mathbf{G}_{\pi(m)} ^H \mathbf{D} \mathbf{G}_{\pi(m)}) 
\end{aligned}
\end{equation}
Then the precoded signal is given by,

\begin{align}
    \mathbf{x}_m &= \mathbf{W}_m \mathbf{S}_{\pi(m)} \nonumber \\ 
    & = \mathbf{V} \Sigma^{-1}(\mathbf{G}_{\pi(m)} \mathbf{U})^H \mathbf{K} \mathbf{G}_{\pi(m)} \mathbf{s} \nonumber \\ 
    & =  \mathbf{V} \mathbf{\Sigma}^{-1}\mathbf{U}^H \underbrace{(\mathbf{G}_{\pi(m)} ^H \mathbf{K} \mathbf{G}_{\pi(m)})}_{\text{Diagonal permutation}} \mathbf{s} \nonumber \\
    & = \mathbf{V}\mathbf{\Sigma}^{-1}\mathbf{U}^H \mathbf{K}_{\pi(m)} \mathbf{s} \label{eq:Xtild}
\end{align}
where $\mathbf{K}_{\pi(m)}$ is a diagonal matrix having the same elements of $\mathbf{K}$ with diagonal entries ordered by $\pi(m)$. 
\end{proof}

Theorem~\ref{thm:thm2} shows that the solution of optimal precoding order with respect to arbitrary objective function $g(\cdot)$ can be obtained by looping over all precoding orders, where for each precoding order, the proposed method can avoid repeating the decomposition by simply permuting a diagonal matrix.

\subsection{Convergence of beamforming and precoding orders for the same strategy}
Specifically, if the criteria of the beamforming is the optimal power allocation, the beamforming solution already achieves minimal power precoding order as shown in Corollary~\ref{col:col1}. 
\begin{corollary}
\label{col:col1}
The optimal power allocation strategy achieves minimal power precoding order.
\end{corollary}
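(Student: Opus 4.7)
The plan is to observe that the beamforming search over diagonal matrices $\mathbf{K}$ strictly contains the order search over permutations $\pi(m)$ of a fixed diagonal, so the power-optimal beamformer cannot do worse than the power-optimal ordering and therefore solves both problems simultaneously. Concretely, using Theorem~\ref{thm:thm2} together with the diagonal-permutation identity $\mathbf{K}_{\pi(m)} = \mathbf{G}_{\pi(m)}^H \mathbf{K} \mathbf{G}_{\pi(m)}$, I would write the total transmit power of the order-$\pi(m)$ precoder as
\begin{equation}
\operatorname{tr}(\mathbf{W}_m \mathbf{W}_m^H) = \operatorname{tr}\!\bigl(\mathbf{K}_{\pi(m)}^H \mathbf{U} \mathbf{\Sigma}^{-2} \mathbf{U}^H \mathbf{K}_{\pi(m)}\bigr),
\end{equation}
which makes explicit that changing the precoding order only permutes the diagonal entries of $\mathbf{K}$ while $\mathbf{U}$, $\mathbf{V}$, and $\mathbf{\Sigma}$ remain fixed.

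Next I would specialize the beamforming problem~\eqref{eq:opK} to ``optimal power allocation'' by taking $f(\mathbf{K})=-\operatorname{tr}(\mathbf{W}\mathbf{W}^H)$ together with per-user effective-gain targets matching the nominal DPC gains. Since every permuted diagonal $\mathbf{K}_{\pi(m)}$ is itself an admissible diagonal matrix, the finite collection $\{\mathbf{K}_{\pi(m)}\}_{m=1}^{N!}$ lies entirely inside the feasible set of the beamformer. By monotonicity of the minimum,
\begin{equation}
\min_{\mathbf{K}} \operatorname{tr}(\mathbf{W}\mathbf{W}^H) \;\le\; \min_{\pi(m)} \operatorname{tr}(\mathbf{W}_m \mathbf{W}_m^H),
\end{equation}
so the beamforming optimum $\mathbf{K}^{*}$ attains a power no larger than that of the best-ordered DPC, which is exactly the statement of the corollary.

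The main obstacle is to pin down precisely what ``optimal power allocation'' means so that the feasible-set inclusion is genuine; in particular, the beamforming objective must be paired with a per-user gain or SNR constraint, otherwise the trivial minimizer $\mathbf{K}=\mathbf{0}$ would dominate and the comparison would be vacuous. Once this constraint is stated as a per-user equality that all $\mathbf{K}_{\pi(m)}$ satisfy, the containment argument above closes the proof with no further computation, and as a byproduct the minimizer $\mathbf{K}^{*}$ can be read off directly as a permutation of the base gains, giving the optimal order essentially for free.
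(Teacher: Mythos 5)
Your containment argument does not prove what the corollary actually claims, and it is not the paper's route. The paper's claim is a \emph{coincidence} statement: if you design $\mathbf{K}$ by the optimal power allocation (water-filling, $k_n = \sqrt{p_n\lambda_n}$ with $p_n = (\mu - 1/\lambda_n)^+$), then that $\mathbf{K}$ is \emph{already} in the power-minimizing order among all of its diagonal permutations, so the subsequent order search of Theorem~\ref{thm:thm2} returns the identity permutation. The paper proves this in two steps you never take: first it reduces the minimal-power order search to $\min_{\pi(m)} \sum_n k_{n,\pi(m)}^2/\lambda_n$, and second it invokes a rearrangement argument --- this sum is minimized when $\{k_{n,\pi(m)}^2\}$ is sorted in the same order as $\{\lambda_n\}$ --- together with the observation that the water-filling gains are already monotone in $\lambda_n$. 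The monotonicity of the water-filling solution is the essential ingredient, and your proof never uses the form of $\mathbf{K}$ at all, so it cannot distinguish water-filling from any other beamformer.

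Your inequality $\min_{\mathbf{K}} \operatorname{tr}(\mathbf{W}\mathbf{W}^H) \le \min_{\pi(m)} \operatorname{tr}(\mathbf{W}_m\mathbf{W}_m^H)$ also suffers from the feasibility problem you yourself identify, and the fix you sketch undermines the argument. If the per-user gain constraints are equalities assigning a specific gain to a specific user, then the permuted matrices $\mathbf{K}_{\pi(m)}$ are \emph{not} feasible for the beamforming problem and the containment fails. If instead the constraint is permutation-invariant (the multiset of gains is fixed), then the beamforming feasible set \emph{is} the set of permutations, the two minima are equal by definition, and the inequality is vacuous. Either way you do not obtain the corollary. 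To close the gap you would need to compute $\mathbb{E}\{|\mathbf{x}_m|^2\}$ explicitly as a function of the permutation (your trace identity is a correct start), reduce it to the weighted sum $\sum_n k_{n,\pi(m)}^2/\lambda_n$, and then show the water-filling ordering solves that discrete minimization via the rearrangement inequality.
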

\begin{proof}
In Theorem~\ref{thm:thm1}, given power constraint $\text{tr}(\mathbf{W}\mathbf{W}^H) \leq P$, the precoding matrix $\mathbf{W}$ with optimal power allocation beamforming is obtained by
\begin{equation}
    \mathbf{W} = \mathbf{V} \mathbf{\Sigma}^{-1}\mathbf{U}^{H} \mathbf{K}
\end{equation}
where, $\mathbf{K}=\text{diag}(k_1,\ldots,k_N)$, and $k_n$ is the effective gain for user $n$ and can be designed by water-filling algorithm as

\begin{align}
    k_n &= \sqrt{p_n \lambda_n}, \quad \text{where,} \quad
    p_n = \left(\mu - \frac{1}{\lambda_n}\right)^+ \label{eq:kandp}
\end{align}
where $\mu$ is a constant to ensure power constraint, and $\lambda_n = \sigma_n^2$ is $n^\text{th}$ eigenvalue where $\sigma_n$ is the $n^\text{th}$ diagonal element of $\Sigma$. $(x)^+$ is defined as $\text{max}(x,0)$. 

On the other hand, to find the optimal precoding order with respect to minimal power, set $g(\mathbf{x}) {=} \mathbb{E}\{|\mathbf{x}|^2\}$, then \eqref{eq:g}, 

\begin{equation}
\begin{aligned}
     \argmin_{\{\pi(m)\}}\quad &
     \mathbb{E}\{|\mathbf{x}_m|^2\} \\
    \text{s.t.} \quad & \mathbf{H}_{\pi(m)}\mathbf{W}_m = \mathbf{K}_{\pi_m}
\end{aligned}
\label{eq:Ex}
\end{equation}
Substituting \eqref{eq:Xtild} in \eqref{eq:Ex}, we have 
\begin{equation}
\begin{aligned}
    \argmin_{\{\pi(m)\}}\quad & \sum_n^N \frac{ k_{n,\pi(m)}^2}{\lambda_n} \label{eq:korder}
\end{aligned}
\end{equation}
where, $k_{n,\pi(m)}$ is $n^\text{th}$ diagonal element of $\mathbf{K}_{\pi(m)}$. Thus, the optimal order in \eqref{eq:Ex} is obtained by simply permuting $\mathbf{K}$. 

\eqref{eq:korder} suggests that the order $\{\pi(m)\}$ ensures
$\{k_{n,\pi(m)}\}$ has the same magnitude order as $\{\lambda_n\}$, achieves the optimal solution. 
Meanwhile, as $k_n$ in the original $\mathbf{K}$ given by~\eqref{eq:kandp} has the positive relation with $\lambda_n$, it is ranked by the same order as $\lambda_n$.  which is the solution of the optimization \eqref{eq:korder}. Thus the original order of $k_n$ is already optimal.
\end{proof}

The beamforming optimization with optimal power allocation strategy target to maximize the energy efficiency. The precoding order optimization with the minimal power criteria target to minimize the signal power with the effective gain unchanged as in~\eqref{eq:Ex}, which also maximize the energy efficiency. They are reasonable to converge to the same solution. Thus the equivalence shown in Corollary 1 validate the correctness of the proposed precoding order optimization.

\subsection{Capacity achieving by diagonal permutation}
The procedure to implement the equivalent DPC (capacity-achieving technique) with precoding order optimization is given in Algorithm~1. First, decompose the CSI by SVD as $\mathbf{H} {=} \mathbf{U}\mathbf{\Sigma}\mathbf{V}^H$ using  Theorem~\ref{thm:thm1} and then constitute $\mathbf{W}$ by the decomposed components with the channel effective gain $\mathbf{K}$ as in~\eqref{eq:W2}, where $\mathbf{K}$ is obtained by solving~\eqref{eq:opK}. To find the optimal precoding order, we collect the all permutation orders $\{\pi(m)\}_{m=1}^M$, where for $N$ users, there are $M {=} N!$ precoding orders. For the precoding order $\pi(m)$, the corresponding precoded signal $\mathbf{x}_m {=} \mathbf{W}_m \mathbf{s}_{\pi(m)}$ is obtained from~\eqref{eq:Xtild}, which only diagonally permutes $\mathbf{K}$ by order $\pi(m)$ as $\mathbf{K_{\pi(m)} = \mathbf{G}_{\pi(m)}^H \mathbf{K} \mathbf{G}_{\pi(m)}}$. The precoded signal with optimal precoding order is obtained by looping over all orders and comparing the corresponding precoded signal by the decision function $g(\cdot)$. In practice, $g(\cdot)$ is based on desired criteria such as minimal AP or minimal PAPR.

\begin{algorithm}
\caption{Capacity achieving by diagonal permutation}
\label{alg:1}
\textbf{Input}: Data $\mathbf{s}$, CSI $\mathbf{H}$, beamforming optimization function $f(\cdot)$, power limitation $P$, decision function $g(\cdot)$\;

\textbf{Output}: Precoded signal $\mathbf{x}$\;
Decompose $\mathbf{H}$ by SVD decomposition $[\mathbf{U}, \mathbf{\Sigma}, \mathbf{V}] = \text{SVD}(\mathbf{H})$\;
Solve $\argmax_{\mathbf{K}} f(\mathbf{K})$, under the constraint $\text{tr}(\mathbf{W}\mathbf{W}^H) \leq P$, where $\mathbf{W} = \mathbf{V} \mathbf{\Sigma}^{-1}\mathbf{U}^{H} \mathbf{K}$ to get $\mathbf{K}$\;
Collect diagonal permutation orders $\{\pi(\cdot)\}$ of $\mathbf{K}$\;
Initialize the order index $m = 1$\;
Permute $\mathbf{K}$ by order $\pi(m)$ to get $\mathbf{K}_{\pi(m)}$\; 
Initialize $\mathbf{x} = \mathbf{V}\mathbf{\Sigma}^{-1}\mathbf{U}^H \mathbf{K}_{\pi(m)}$\;
\While{$m < \text{size}(\{\pi(\cdot)\})$}{ 
  $m = m + 1$ \;
  \text{Update} $\mathbf{K}_{\pi(m)}$ \text{by permuting} $\mathbf{K}$ \text{by order} $\pi(m)$\;
  $\mathbf{x}_m = \mathbf{V}\mathbf{\Sigma}^{-1}\mathbf{U}^H \mathbf{K}_{\pi(m)}$\;
  \text{Update} $\mathbf{x}$ \text{by decision function} $\mathbf{x} = g(\mathbf{x}_m, \mathbf{x})$\; 
}
\end{algorithm}

\begin{figure*}[t]
\centering
\begin{subfigure}{.3\textwidth}
\centering
      \includegraphics[width=0.8\linewidth]{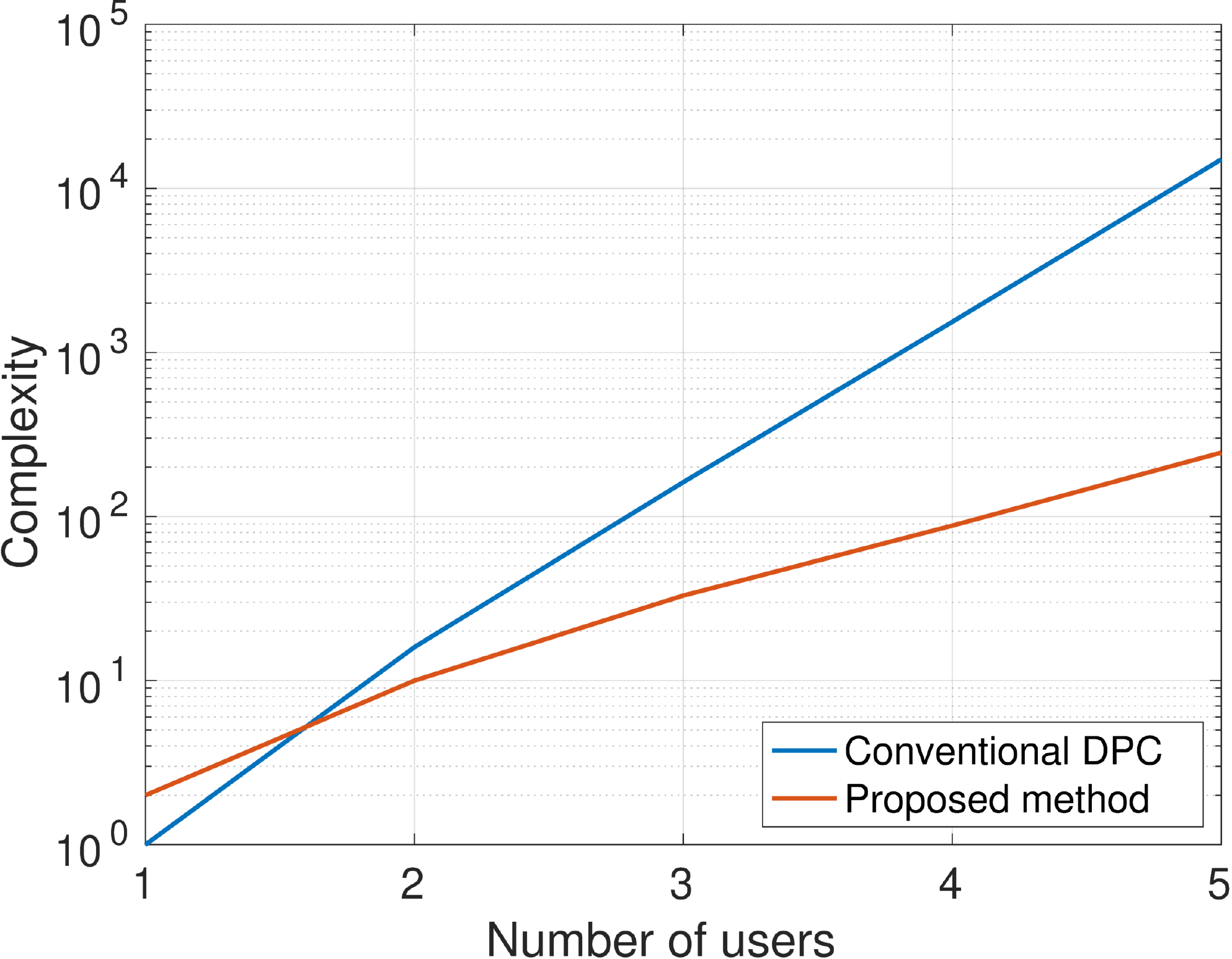}
    \caption{Complexity ($\mathcal{O}$) of conventional DPC and Algorithm~\ref{alg:1} with respect to $N$}
    \label{fig:complexity}
\end{subfigure}
\quad
\begin{subfigure}{.3\textwidth}
     \centering
    \includegraphics[width=0.8\linewidth]{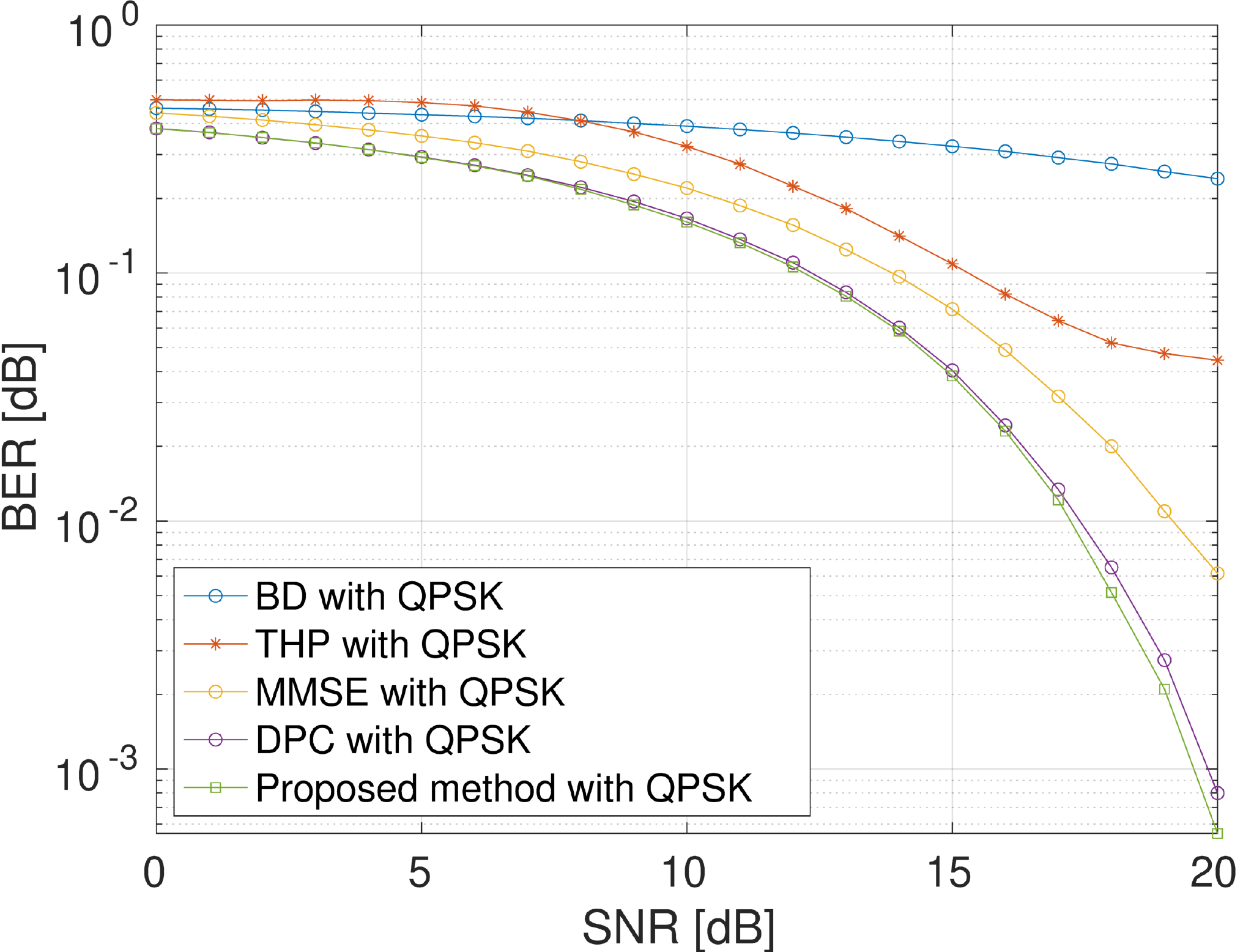}
    \caption{BER of BD, THP, MMSE, conventional DPC and Algorithm~\ref{alg:1} for QPSK}
    \label{fig:BER}
\end{subfigure}
\quad
\begin{subfigure}{.3\textwidth}
     \centering
    \includegraphics[width=0.8\linewidth]{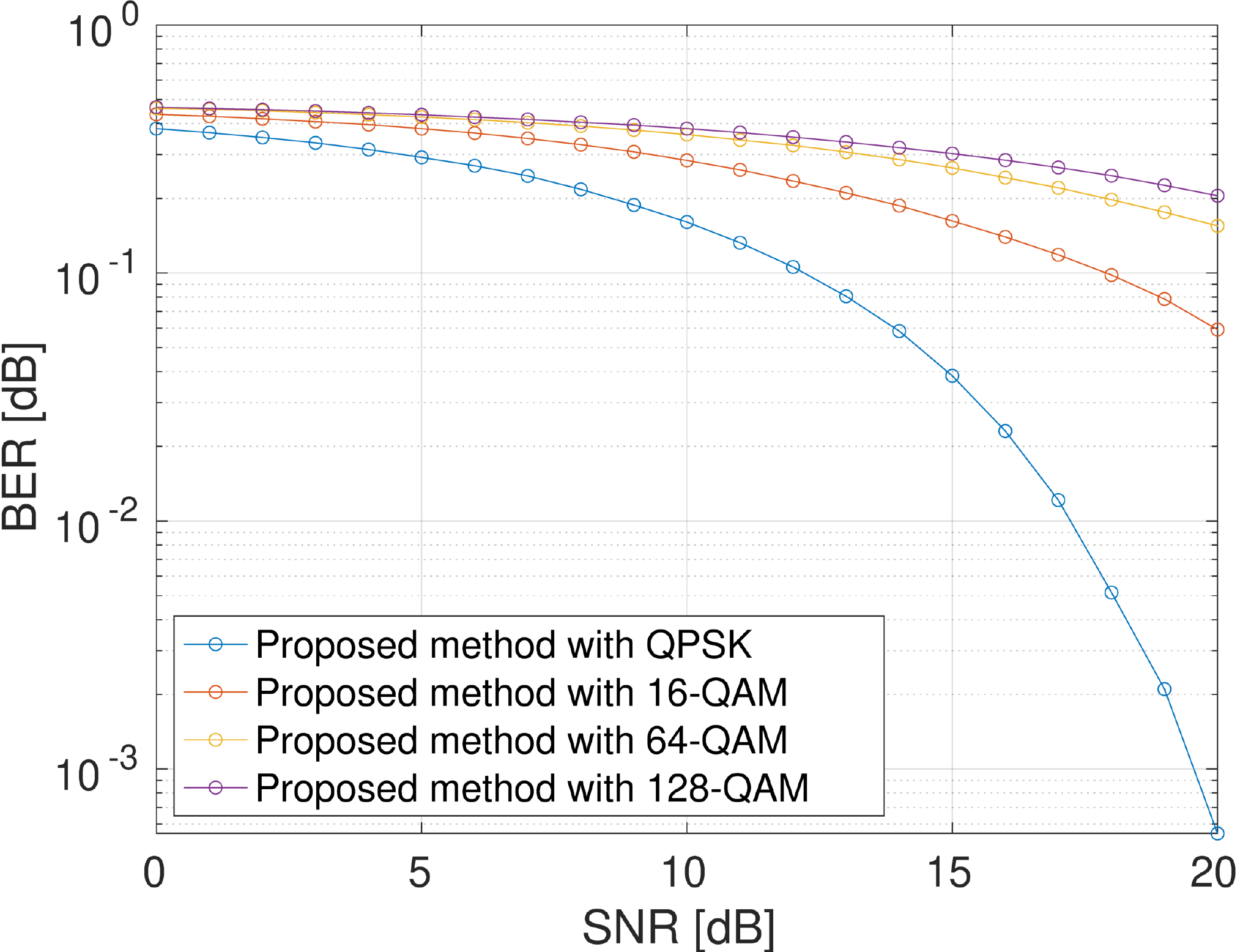}
    \caption{Algorithm~\ref{alg:1} for QPSK, 16-QAM, 64-QAM and 128-QAM schemes}
    \label{fig:ber_multi}
\end{subfigure}
\caption{Comparison of complexity and BER}
\vspace{-10pt}
\end{figure*}
Conventional DPC implementation needs to repeat DPC for each precoding order, whose complexity is $\mathcal{O}(N^3N!)$ for $N$ users case. In contrast the proposed algorithm requires a one-time DPC and then search for the optimal precoding order by just permuting the channel gain matrix that has a computational complexity of $\mathcal{O}(N^3+N!)$. Figure~\ref{fig:complexity} shows that for multi-user cases, i.e., $N \geq 2$, the proposed methods always achieve lower complexity than conventional DPC. For users number $N=5$, the complexity ratio of conventional DPC over proposed method is  ${\approx}20$ dB, which is very encouraging.

\section{Results}

We validate the equivalence of the proposed methods and conventional DPC with perfect CSI at the transmitter using MATLAB simulations. The number of the transmitter antennas and the users are both $N {=} 10$, where each user is equipped with one antenna. The coefficient of the channel matrix is generated by standard Gaussian distribution. The effective channel gain of both methods are normalized. Figure~\ref{fig:BER} compares the BER of Block Diagonalization (BD), Tomlinson-Harashima Precoding (THP), Minimum Mean Square Error (MMSE) precoding, conventional DPC and proposed method for QPSK. It is evident that the proposed method achieves the similar result as conventional DPC and outperforms other techniques, supporting the theoretical equivalence discussed earlier. The proposed method for various modulation schemes is compared in Figure~\ref{fig:ber_multi}.


\section{Conclusion}
\label{sec:conclusion}
In this paper, we show the equivalence of DPC and 2-D HOGMT precoding for MU-MIMO channels and give an alternate low-complexity implementation based on SVD decomposition to replace the iterative method based on Gram-Schmidt processes and LQ decomposition. Then we show the difference between SVD decomposition and LQ decomposition with respect to permutation, where an unitary matrix after permutation is still unitary but a triangular matrix cannot maintain its structure under the same permutation. This difference suggests that conventional method needs to repeat DPC for different precoding orders while the proposed method just needs one-time DPC and then searches the optimal precoding order by permuting a component of the precoding matrix, which is also shown the convergence. Simulations show that our method is able to achieve the same BER performance as DPC but with less complexity. For $N=5$ users case, the proposed method achieves near $20$ dB lower complexity.






\bibliographystyle{IEEEtran}
\bibliography{references}

\begin{thebibliography}{10}
\providecommand{\url}[1]{#1}
\csname url@samestyle\endcsname
\providecommand{\newblock}{\relax}
\providecommand{\bibinfo}[2]{#2}
\providecommand{\BIBentrySTDinterwordspacing}{\spaceskip=0pt\relax}
\providecommand{\BIBentryALTinterwordstretchfactor}{4}
\providecommand{\BIBentryALTinterwordspacing}{\spaceskip=\fontdimen2\font plus
\BIBentryALTinterwordstretchfactor\fontdimen3\font minus
  \fontdimen4\font\relax}
\providecommand{\BIBforeignlanguage}[2]{{%
\expandafter\ifx\csname l@#1\endcsname\relax
\typeout{** WARNING: IEEEtran.bst: No hyphenation pattern has been}%
\typeout{** loaded for the language `#1'. Using the pattern for}%
\typeout{** the default language instead.}%
\else
\language=\csname l@#1\endcsname
\fi
#2}}
\providecommand{\BIBdecl}{\relax}
\BIBdecl

\bibitem{Zou2022ICC}
Z.~Zou, M.~Careem, A.~Dutta, and N.~Thawdar, ``Unified characterization and
  precoding for non-stationary channels,'' in \emph{ICC 2022 - IEEE
  International Conference on Communications}, 2022, pp. 5140--5146.

\bibitem{ZouTCOM23}
{Z. Zou}, M.~Careem, A.~Dutta, and N.~Thawdar, ``{Joint Spatio-Temporal
  Precoding for Practical Non-Stationary Wireless Channels},'' \emph{IEEE
  Transactions on Communications}, pp. 1--1, 2023.

\bibitem{Vu2007Precoding}
M.~Vu and A.~Paulraj, ``{MIMO Wireless Linear Precoding},'' \emph{IEEE Signal
  Processing Magazine}, vol.~24, no.~5, pp. 86--105, 2007.

\bibitem{CostaDPC1983}
M.~Costa, ``Writing on dirty paper (corresp.),'' \emph{IEEE Transactions on
  Information Theory}, vol.~29, no.~3, pp. 439--441, 1983.

\bibitem{Cho2010MIMObook}
Y.~S. Cho, J.~Kim, W.~Y. Yang, and C.~G. Kang, \emph{{MIMO-OFDM Wireless
  Communications with MATLAB}}.\hskip 1em plus 0.5em minus 0.4em\relax Wiley
  Publishing, 2010.

\bibitem{Ma2016MUMIMO}
Y.~Ma, A.~Yamani, N.~Yi, and R.~Tafazolli, ``Low-complexity mu-mimo nonlinear
  precoding using degree-2 sparse vector perturbation,'' \emph{IEEE Journal on
  Selected Areas in Communications}, vol.~34, no.~3, pp. 497--509, 2016.

\bibitem{Caire2003capacity}
G.~Caire and S.~Shamai, ``{On the achievable throughput of a multiantenna
  Gaussian broadcast channel},'' \emph{IEEE Transactions on Information
  Theory}, vol.~49, no.~7, pp. 1691--1706, 2003.

\bibitem{Vishwanath2003Dual}
S.~Vishwanath, N.~Jindal, and A.~Goldsmith, ``Duality, achievable rates, and
  sum-rate capacity of gaussian mimo broadcast channels,'' \emph{IEEE
  Transactions on Information Theory}, vol.~49, no.~10, pp. 2658--2668, 2003.

\bibitem{LeeDPC2007}
J.~Lee and N.~Jindal, ``{High SNR Analysis for MIMO Broadcast Channels: Dirty
  Paper Coding Versus Linear Precoding},'' \emph{IEEE Transactions on
  Information Theory}, vol.~53, no.~12, pp. 4787--4792, 2007.

\bibitem{Weingarten2006Capacity}
H.~Weingarten, Y.~Steinberg, and S.~Shamai, ``The capacity region of the
  gaussian multiple-input multiple-output broadcast channel,'' \emph{IEEE
  Transactions on Information Theory}, vol.~52, no.~9, pp. 3936--3964, 2006.

\bibitem{Rashid1998Bm}
F.~Rashid-Farrokhi, K.~R. Liu, and L.~Tassiulas, ``{Transmit beamforming and
  power control for cellular wireless systems},'' \emph{IEEE Journal on
  Selected Areas in Communications}, vol.~16, no.~8, pp. 1437--1450, 1998.

\bibitem{Schubert2004Bm}
M.~Schubert and H.~Boche, ``Solution of the multiuser downlink beamforming
  problem with individual sinr constraints,'' \emph{IEEE Transactions on
  Vehicular Technology}, vol.~53, no.~1, pp. 18--28, 2004.

\bibitem{Schubert2005Bm}
{M. Schubert} and {H. Boche}, ``{Iterative multiuser uplink and downlink
  beamforming under SINR constraints},'' \emph{IEEE Transactions on Signal
  Processing}, vol.~53, no.~7, pp. 2324--2334, 2005.

\bibitem{Michel2007Order}
\BIBentryALTinterwordspacing
T.~Michel and G.~Wunder, ``Transmitter and precoding order optimization for
  nonlinear downlink beamforming,'' 2007. [Online]. Available:
  \url{https://arxiv.org/abs/cs/0702122}
\BIBentrySTDinterwordspacing

\bibitem{Almers2007ChannelSurvey}
P.~Almers, E.~Bonek, A.~Burr, N.~Czink, m.~Debbah, V.~Degli-Esposti,
  H.~Hofstetter, P.~Kyösti, D.~Laurenson, G.~Matz, A.~Molisch, C.~Oestges, and
  H.~Özcelik, ``{Survey of Channel and Radio Propagation Models for Wireless
  MIMO Systems.}'' \emph{EURASIP J. Wireless Comm. and Networking}, vol. 2007,
  01 2007.

\bibitem{Matz2005NS}
{G. Matz}, ``On non-wssus wireless fading channels,'' \emph{IEEE Transactions
  on Wireless Communications}, vol.~4, no.~5, pp. 2465--2478, 2005.

\bibitem{MATZ20111}
F.~Hlawatsch and G.~Matz, \emph{Wireless Communications Over Rapidly
  Time-Varying Channels}, 1st~ed.\hskip 1em plus 0.5em minus 0.4em\relax USA:
  Academic Press, Inc., 2011.

\bibitem{Mao2020DPC}
Y.~Mao and B.~Clerckx, ``{Beyond Dirty Paper Coding for Multi-Antenna Broadcast
  Channel With Partial CSIT: A Rate-Splitting Approach},'' \emph{IEEE
  Transactions on Communications}, vol.~68, no.~11, pp. 6775--6791, 2020.

\end{thebibliography}




\end{document}


\def\eg{\mbox{\em e.g.}, }


\title{Proofs and Supplementary Material:\\ Unified Characterization and Precoding for Non-Stationary Channels}

\author{
\begin{tabular}[t]{c@{\extracolsep{8em}}c} 
Zhibin Zou, Maqsood Careem, Aveek Dutta & Ngwe Thawdar \\
Department of Electrical and Computer Engineering & US Air Force Research Laboratory \\ 
University at Albany SUNY, Albany, NY 12222 USA & Rome, NY, USA \\
\{{zzou2, mabdulcareem, adutta\}@albany.edu} & 
ngwe.thawdar@us.af.mil
\end{tabular}
\vspace{-3ex}
}


    
\maketitle









\setcounter{equation}{34}
\appendices
\label{appendix:precoding}





\noindent
\textbf{Instructions:}
Equations (1)--(34) refer to the equations from the main manuscript (``Unified Characterization and Precoding for Non-Stationary Channels" accepted for publication at IEEE ICC 2022). This document provides the supplementary material including a comprehensive related work, the complete proofs and extended evaluation results to support the main manuscript.  


\section{Related work}
\label{App:related_NLP}


We categorize the related work into three categories:

\noindent
\textbf{Characterization of Non-Stationary Channels:}
Wireless channel characterization in the literature typically require several local and global (in space-time dimensions) higher order statistics to characterize or model non-stationary channels, due to their time-varying statistics. 
These statistics cannot completely characterize the non-stationary channel, however are useful in reporting certain properties that are required for the application of interest such as channel modeling, assessing the degree of stationarity etc.
Contrarily, we leverage the 2-dimensional eigenfunctions that are decomposed from the most generic representation of any wireless channel as a spatio-temporal channel kernel.
These spatio-temporal eigenfunctions can be used to extract any higher order statistics of the channel as demonstrated in Section \rom{3}, and hence serves as a complete characterization of the channel.
Furthermore, since this characterization can also generalize to stationary channels, it is a unified characterization for any wireless channel.
Beyond characterizing the channel, these eigenfunctions are the core of the precoding algorithm.

\noindent
\textbf{Precoding Non-Stationary Channels:}
Although precoding non-stationary channels is unprecedented in the literature \cite{AliNS0219}, we list the most related literature for completeness. 
The challenge in precoding non-stationary channels is the lack of accurate models of the channel and the (occasional) CSI feedback does not fully characterize the non-stationarities in its statistics. This leads to suboptimal performance using state-of-the-art precoding techniques like Dirty Paper Coding which assume that complete and accurate knowledge of the channel is available, while the CSI is often outdated in non-stationary channels. 
While recent literature present  attempt to deal with imperfect CSI by modeling the error in the CSI \cite{HatakawaNLP2012, HasegawaTHP2018, GuoTHP2020, DietrichTHP2007, CastanheiraPGS2013, WangTHP2012, MazroueiTDVP2016, Jacobsson1DAC2017}, they are limited by the assumption the channel or error statistics are stationary or WSSUS at best.
Another class of literature, attempt to deal with the impact of outdated CSI~\cite{AndersonLP2008,Zeng2012LP} in time-varying channels by quantifying this loss or relying statistical CSI. These methods are not directly suitable for non-stationary channels, as the time dependence of the statistics may render the CSI (or its statistics) stale, consequently resulting in precoding error. 



\noindent
\textbf{Space-Temporal Precoding:}
While, precoding has garnered significant research, spatio-temporal interference is typically treated as two separate problems, where spatial precoding at the transmitter aims to cancel inter-user and inter-antenna interference, while equalization at the receiver mitigates inter-carrier and inter-symbol interference.
Alternately, \cite{hadani2018OTFS} proposes to modulate the symbols such that it reduces the cross-symbol interference in the delay-Doppler domain, but requires equalization at the receiver to completely cancel such interference in practical systems. 
Moreover, this approach cannot completely minimize the joint spatio-temporal interference that occurs in non-stationary channels since their statistics depend on the time-frequency domain in addition to the delay-Dopper domain (explained in Section \rom{2}).
While spatio-temporal block coding techniques are studied in the literature \cite{Cho2010MIMObook} they add redundancy and hence incur a communication overhead to mitigate interference, which we avoid by precoding. 
These techniques are capable of independently canceling the interference in each domain, however are incapable of mitigating interference that occurs in the joint spatio-temporal domain in non-stationary channels. 
We design a joint spatio-temporal precoding that leverages the extracted 2-D eigenfunctions from non-stationary channels to mitigate interference that occurs on the joint space-time dimensions, which to the best of our knowledge is unprecedented in the literature. 

\section{Proofs on Unified Characterization}
\label{app:characterization}

\subsection{Proof of Lemma 1: Generalized Mercer's Theorem}
\label{App:gmt}

\begin{proof}
Consider a 2-dimensional process $K(t,t') \in L^2(Y \times X)$, where $Y(t)$ and $X(t')$ are square-integrable zero-mean random processes with covariance function $K_{Y}$ and $K_{X}$, respecly. 
%
The projection of $K(t, t')$ onto $X(t')$ is obtained as in \eqref{eq:projection},
\begin{align}
\label{eq:projection}
    & C(t) = \int K(t, t') X(t') ~dt'
\end{align}

Using \textit{Karhunen–Loève Transform} (KLT), $X(t')$ and $C(t)$ are both decomposed as in \eqref{eq:X_t} and \eqref{eq:C_t}, 
\begin{align}
    &X(t') = \sum_{i = 1}^{\infty} x_{i} \phi_{i}(t') \label{eq:X_t}\\
    &C(t) = \sum_{j = 1}^{\infty} c_{j} \psi_{j}(t) \label{eq:C_t}
\end{align}
where $x_i$ and $c_j$ are both random variables with $\mathbb{E}\{x_i x_{i'}\} {=} \lambda_{x_i} \delta_{ii'}$ and $\mathbb{E}\{c_j c_{j'}\} {=} \lambda_{c_j} \delta_{jj'}$.  $\{\lambda_{x_i}\}$, $\{\lambda_{x_j}\}$ $\{\phi_i(t')\}$ and $\{\psi_j(t)\}$ are eigenvalues and eigenfuncions, respectively.
%

Let us denote $n{=}i{=}j$ and $\sigma_n {=} \frac{c_n}{x_n}$, and assume that $K(t,t')$ can be expressed as in \eqref{eq:thm_K_t},
\begin{align}
\label{eq:thm_K_t}
    K(t,t') = \sum_n^\infty \sigma_n \psi_{n}(t) \phi_{n}(t')
\end{align}
We show that \eqref{eq:thm_K_t} is a correct representation of $K(t,t')$ by proving \eqref{eq:projection} holds under this definition. 
We observe that by substituting \eqref{eq:X_t} and \eqref{eq:thm_K_t} into the right hand side of \eqref{eq:projection} we have that,
\begin{align}
    & \int K(t, t') X(t') ~dt' \nonumber \\
    & = \int \sum_n^\infty \sigma_n \psi_{n}(t) \phi_{n}(t') \sum_{n}^{\infty} x_{n} \phi_{n}(t') ~dt' \nonumber \\
    & = \int \sum_n^\infty \sigma_n x_n \psi_n(t) |\phi_n(t')|^2 \nonumber\\
    & + \sum_{n'\neq n}^ \infty \sigma_{n} x_{n'} \psi_{n}(t) \phi_{n}(t') \phi_{n'}^*(t') ~d t' \nonumber \\
    & = \sum_n^\infty c_n \psi_n(t) = C(t)
\end{align}
which is equal to the left hand side of \eqref{eq:projection}. 
Therefore, \eqref{eq:thm_K_t} is a correct representation of $K(t,t')$.

\end{proof}

\subsection{Proof of Theorem 1: High Order Generalized Mercer's Theorem (HOGMT}
\label{app:hogmt}

\begin{proof}
Given a 2-D process $X(\gamma_1, \gamma_2)$, the eigen-decomposition using Lemma 1 is given by,
\begin{equation}
\label{eq:thm1_1}
    X(\gamma_1, \gamma_2) = \sum_{n}^{\infty} x_{n} e_n(\gamma_1) s_n(\gamma_2)
\end{equation}

Letting $\psi_n(\gamma_1,\gamma_2) {=} e_n(\gamma_1) s_n(\gamma_2)$, and substituting it in \eqref{eq:thm1_1} we have that,

\begin{equation}
\label{eq:2d_klt}
    X(\gamma_1, \gamma_2) = \sum_{n}^{\infty} x_{n} \phi_n(\gamma_1,\gamma_2)
\end{equation}
where $\phi_n(\gamma_1,\gamma_2)$ are 2-D eigenfunctions with the property \eqref{eq:prop1}.
\begin{equation}
\label{eq:prop1}
\iint \phi_n(\gamma_1,\gamma_2) \phi_{n'}(\gamma_1,\gamma_2) ~d\gamma_1 ~d\gamma_2 = \delta_{nn'} 
\end{equation}

We observe that \eqref{eq:2d_klt} is the 2-D form of KLT. With iterations of the above steps, we obtain \textit{Higher-Order KLT} for $X(\gamma_1,\cdots,\gamma_Q)$ and $C(\zeta_1,\cdots,\zeta_P)$ as given by,
\begin{align}
   & X(\gamma_1,\cdots,\gamma_Q) = \sum_{n}^{\infty} x_{n} \phi_n(\gamma_1,\cdots,\gamma_Q) \\
   & C(\zeta_1,\cdots,\zeta_P) = \sum_{n}^{\infty} c_{n} \psi_n(\zeta_1,\cdots,\zeta_P)
\end{align}
where $C(\zeta_1,\cdots,\zeta_P)$ is the projection of $X(\gamma_1,\cdots,\gamma_Q)$ onto $K(\zeta_1,\cdots,\zeta_P; \gamma_1,\cdots, \gamma_Q)$.

Then following similar steps as in Appendix~\ref{App:gmt} we get \eqref{eq:col}. 
\begin{align}
\label{eq:col}
& K(\zeta_1,\cdots,\zeta_P; \gamma_1,\cdots, \gamma_Q) \nonumber \\
& = \sum_{n}^ \infty \sigma_n \psi_n(\zeta_1,\cdots,\zeta_P) \phi_n(\gamma_1,\cdots, \gamma_Q)
\end{align}
\end{proof}

\section{Proofs on Eigenfunction based Precoding}
\label{app:precoding}

\subsection{Proof of Lemma 2}
\label{app:lem1}

\begin{proof}
Using 2-D KLT as in (13), $x(u,t)$ is expressed as,
\begin{equation}
    x(u,t) = \sum_{n}^ \infty x_n \phi_n(u,t)
\end{equation}
where $x_n$ is a random variable with $E\{x_n x_{n'}\}{=} \lambda_n \sigma_{nn'} $ and $\phi_n(u,t)$ is a 2-D eigenfunction. 

Then the projection of $k_H(u,t;u',t')$ onto $\phi_n(u',t')$ is denoted by $ c_n(u,t)$ and is given by,
\begin{equation}
    c_n(u,t) =  \iint k_H(u,t;u',t') \phi_n(u',t') ~du' ~dt'
\end{equation}

Using the above, (28) is expressed as,
\begin{align}
\label{eq:obj_trans}
     & ||s(u,t) - Hx(u,t)||^2 = ||s(u,t) - \sum_n^ \infty x_n c_n(u,t)||^2
\end{align}

Let $\epsilon (x) {=} ||s(u,t) - \sum_n^ \infty x_n \phi_n(u,t)||^2$. Then its expansion is given by,
\begin{align}
\label{eq:ep}
     & \epsilon (x) = \langle s(u,t),s(u,t) \rangle - 2\sum_n ^ \infty  x_n \langle c_n(u,t),s(u,t) \rangle \\
     & + \sum_n^ \infty x_n^2 \langle c_n(u,t), c_n(u,t) \rangle \nonumber + \sum_n^ \infty \sum_{n' \neq n}^ \infty x_n x_{n'}  \langle c_n(u,t), c_{n'}(u,t) \rangle
\end{align}

Then the solution to achieve minimal $\epsilon(x)$ is obtained by solving for $\pdv{\epsilon(x)}{x_n} = 0$ as in \eqref{eq:solution}.
\begin{align}
\label{eq:solution}
    x_n^{opt} & {=}  \frac{\langle s(u,t), c_n(u,t) \rangle + \sum_{n'\neq n}^ \infty x_{n'} \langle c_{n'}(u,t), c_n(u,t) \rangle }{\langle c_n(u,t), c_n(u,t) \rangle}
\end{align}
where $\langle a(u,t), b(u,t) \rangle {=} \iint a(u,t) b^*(u,t) ~du ~dt$ denotes the inner product. 
%
Let $\langle c_{n'}(u,t), c_n(u,t) \rangle = 0$, i.e., the projections $\{ c_n(u,t)\}_n$ are orthogonal basis. Then we have a closed form expression for $x^{opt}$ as in \eqref{eq:x_opt}.
\begin{align}
\label{eq:x_opt}
    x_n^{opt} & {=}  \frac{\langle s(u,t), c_n(u,t) \rangle}{\langle c_n(u,t), c_n(u,t) \rangle}
\end{align}

Substitute \eqref{eq:x_opt} in \eqref{eq:ep}, it is straightforward to show that $\epsilon(x){=} 0$.
\end{proof}

\subsection{Proof of Theorem 2: Eigenfunction Precoding}
\label{app:thm_2}
\begin{proof}
The 4-D kernel $k_H(u,t;u',t')$ is decomposed into two separate sets of eigenfunction $\{\phi_n(u',t')\}$ and $\{\psi_n(u, t) \}$ using Theorem 1 as in (30). By transmitting the conjugate of the eigenfunctions, $\phi_n(u',t')$ through the channel $H$, we have that,  
\begin{align}
   & H \phi_n^*(u',t') = \iint k_H(u,t;u',t') \phi_n^*(u',t') ~du' ~d t' \nonumber \\ 
   & {=} \iint \sum_{n}^ \infty \{\sigma_n \psi_n(u,t) \phi_n(u',t')\} \phi_n^*(u',t') ~d t' ~d f' \nonumber \\ 
   & {=} \iint \sigma_n \psi_n(u,t) |\phi_n(u',t')|^2 \nonumber\\
   & + \sum_{n'\neq n}^ \infty \sigma_{n'} \psi_{n'}(u,t) \phi_{n'}(u',t')\ \phi_n^*(u',t') ~du' ~d t' \nonumber \\
   & {=} \sigma_n \psi_n(u,t)
\end{align}
where $\psi_n(u,t)$ is also a 2-D eigenfunction with the orthogonal property as in (31). 

From Lemma 2, if the set of projections, $\{c_n(u,t)\}$ is the set of eigenfunctions, $\{\psi_n(u,t)\}$, which has the above orthogonal property, we achieve the optimal solution as in \eqref{eq:x_opt}. Therefore, let $x(u,t)$ be the linear combination of $\{\phi_n^*(u,t)\}$ with coefficients $\{x_n\}$ as in \eqref{eq:construct},

\begin{equation}
\label{eq:construct}
    x(u,t) = \sum_n^ \infty x_n \phi_n^*(u,t) 
\end{equation}

Then \eqref{eq:obj_trans} is rewritten as in \eqref{eq:obj_trans2},
\begin{align}
\label{eq:obj_trans2}
     & ||s(u,t) - Hx(u,t)||^2 = ||s(u,t) - \sum_n^ \infty x_n \sigma_n \psi(u,t)||^2
\end{align}
   
Therefore, optimal $x_n$ in \eqref{eq:x_opt} is obtained as in \eqref{eq:opt},

\begin{equation}
\label{eq:opt}
    x_n^{opt} = \frac{\langle s(u,t), \psi_n(u,t) \rangle}{\sigma_n} 
\end{equation}

Substituting \eqref{eq:opt} in \eqref{eq:construct}, the transmit signal is given by \eqref{eq:x_opt2},
\begin{equation}
\label{eq:x_opt2}
    x(u,t) = \sum_n^ \infty \frac{\langle s(u,t), \psi_n(u,t) \rangle}{\sigma_n} \phi_n^*(u,t). 
\end{equation}
\end{proof}

\subsection{Proof of Corollary 1}
\label{app:EP_space}
\begin{proof}
First we substitute the 4-D kernel $k_H(u,t;u',t')$ with the 2-D kernel $k_H(u,u')$ in Theorem 2 which is then decomposed by the 2-D HOGMT. Then following similar steps as in Appendix~\ref{app:thm_2} it is straightforward to show (34).
\end{proof}

\section{Results on Interference}
\label{App:results_interference}
\begin{figure}[h]
  \centering
  \includegraphics[width=1\linewidth]{figures/hst_1_10_50_100.pdf}
  \caption{Kernel $k_H(u,t;u',t')$ for $u {=} 1$ at a) $t {=} 1$, b) $t {=} 10 $, c) $t {=} 50$ and d) $t {=} 100$.}
  \label{fig:hst_1_10_50_100}
\label{fig:hst_1_10_50_100}
\end{figure}
Figure~\ref{fig:hst_1_10_50_100} shows the channel response for user $u {=} 1$ at $t{=}1$, $t{=}10$, $t{=}50$ and $t{=}100$, where at each instance, the response for user $u {=} 1$ is not only affected by its own delay and other users' spatial interference, but also affected by other users' delayed symbols. 
This is the cause of joint space-time interference which necessitates joint precoding in the 2-dimensional space using eigenfunctions that are jointly orthogonal.

\section{Proof of Theorem 1: Correctness of AE for KLT}
Consider the AE loss function $\mathcal{L}{=}\mathcal{J}{+}\Omega$, where $\mathcal{J}{=}\|\textbf{x}-\hat{\textbf{x}}\|^2$ ($\textbf{x}$ represents each column of the hankel matrix $\textbf{U}$ which is used to train the AE) and $\Omega$ as defined below,
\begin{equation}
\Omega(\boldsymbol{W}_{dec})= \lambda \left([\boldsymbol{W}_{dec}^H\boldsymbol{W}_{dec}-\boldsymbol{I}] + \boldsymbol{W}_{dec}^H\mathbf{R_{\mathbf{xx}}}\boldsymbol{W}_{dec}\right)
\label{eq:KLT}
\vspace{-10pt}
\end{equation}
$\mathbf{R_{\mathbf{xx}}}$ is the input correlation matrix, and  $\lambda{>}0$ is a constant used to penalize the deviation from the KLT eigen bases. The first term ensures that the basis vectors are orthonormal and the final term ensures that the transformed signal components are uncorrelated.

\textbf{Proof:} 
We first observe that finding the eigenfunctions in KLT decomposition and reconstruction is equivalent to solving the following objective function: $\min _{\mathbf{\Phi}}\left\|\mathrm{U}-\mathrm{\Phi\Phi}^{T} \mathrm{U}\right\|_{F}^{2} \text { s.t. } \mathrm{\Phi}^{T} \mathrm{\Phi}=\mathrm{I}$, where $U$ is the Hankel matrix.
Given the linear AE is trained by sequentially feeding in each column of $U$ (denoted as $x$), the encoder and decoder is given by $y=W_{enc}x$ and $\hat{x}=W_{dec}y$. Since the cost function $\mathcal{J}$ is the total squared difference between output and input, then training the autoencoder on the input hankel matrix $\textbf{U}$ solves the following: $\min_{\boldsymbol{\Phi}}\left\|\mathrm{U}-\mathrm{W}_{dec} \mathrm{W}_{enc} \mathrm{U}\right\|_{F}^{2}{+}\Omega_{TED}(\boldsymbol{\Phi})$. 
In \cite{NN_PCA} it is shown that when setting the gradients to zero, $W_{enc}$ is the left Moore-Penrose pseudoinverse of $\mathrm{W}_{dec}$): $\mathrm{W}_{enc}=\mathrm{W}_{dec}^{\dagger}=\left(\mathrm{W}_{dec}^{H} \mathrm{~W}_{dec}\right)^{-1} \mathrm{~W}_{dec}^{H}$ and hence the loss function becomes $\min_{\boldsymbol{\Phi}}\left\|\mathrm{U}-\mathrm{W}_{dec} \mathrm{W}_{dec}^{\dagger} \mathrm{U}\right\|_{F}^{2}{+}\Omega_{TED}(\boldsymbol{\Phi})$. The regularization function $\Omega_{TED}(\boldsymbol{\Phi})$ ensures that the weights are orthonormal, i.e., $\mathrm{W}_{dec} \mathrm{W}_{dec}^{H}=\textbf{I}$, and therefore implies that $\mathrm{W}_{enc}{=}\mathrm{W}_{dec}^{H}$. Then minimizing the loss function of the AE is equivalent to minimizing the objective in KLT since the KLT optimization and AE loss functions are equivalent and the eigen decomposition is a unique representation.
Therefore, we have that $\mathrm{W}_{dec}{=}\mathrm{W}_{enc}^H$ is the same as $\Phi$.
This can be extended to show that a nonlinear AE extracts the nonlinear kernel-KLT eigenfunctions.

\textbf{a) Extension to Deep Autoencoders:}
Let $W_1,{\ldots},W_d$ be the weight matrices corresponding to the $d$ layers at the encoder. Then we calculate $W_{enc}$ or $W_{dec}$ from the trained weights after training as, $W_{enc}=W_1W_2{\ldots}W_d=W_{dec}^H$. Then the proof above holds \cite{ghojogh2019unsupervised}.   

\textbf{b) Extension to Nonlinear Deep Autoencoders:}
Let $W_1,{\ldots},W_d$ be the nonlinear functions corresponding to the $d$ layers at the encoder. Then we calculate $W_{enc}$ or $W_{dec}$ from the trained weights and biases (or from the functions they describe) after training as, $W_{enc}(x)=W_d({\ldots}(W_2(W_1(x)){\ldots})$. Then $W_{dec}$ learns a kernel or non-linear eigen transformation \cite{ghojogh2019unsupervised}.   

\bibliographystyle{IEEEtran}
\bibliography{references, ref_precoding}